\newtheorem{theorem}{Theorem}
\newtheorem{corollary}{Corollary}
\newtheorem{lemma}[theorem]{Lemma}
\newtheorem*{definition}{Definition}
\newtheorem{remark}{Remark}
\newtheorem{property}{Property}
\newcommand{\whencolumns}[2]{
	#2
}
\newcommand{\whencolumns}[2]{
	#1
}
\begin{document}

\title{A Multi-Objective Optimization Framework for URLLC with Decoding Complexity Constraints
}

\author{Hasan~Basri~Celebi,~\IEEEmembership{Student Member,~IEEE,}
	Antonios~Pitarokoilis,~\IEEEmembership{Member,~IEEE,}
	and~Mikael~Skoglund,~\IEEEmembership{Fellow,~IEEE}
	\thanks{Hasan Basri Celebi and Mikael Skoglund are with the School of Electrical
		Engineering and Computer Science, KTH Royal Institute of Technology,
		Stockholm, Sweden (e-mail: hbcelebi@kth.se, skoglund@kth.se).}
	\thanks{Antonios Pitarokoilis is with Ericsson AB, Stockholm, Sweden (e-mail: antonios.pitarokoilis@ericsson.com).}
	\thanks{This work was funded in part by the Swedish Foundation for Strategic Research (SSF) under grant agreement RIT15-0091.}
}

\maketitle

\begin{abstract}
	Stringent constraints on both reliability and latency must be guaranteed in ultra-reliable low-latency communication (URLLC). To fulfill these constraints with computationally constrained receivers, such as low-budget IoT receivers, optimal transmission parameters need to be studied in detail. In this paper, we introduce a multi-objective optimization framework for the optimal design of URLLC in the presence of decoding complexity constraints. We consider transmission of short-blocklength codewords that are encoded with linear block encoders, transmitted over a binary-input AWGN channel, and finally decoded with order-statistics (OS) decoder. We investigate the optimal selection of a transmission rate and power pair, while satisfying the constraints. For this purpose, a multi-objective optimization problem (MOOP) is formulated. Based on the empirical model that accurately quantifies the trade-off between the performance of an OS decoder and its computational complexity, the MOOP is solved and the Pareto boundary is derived. In order to assess the overall performance among several Pareto-optimal transmission pairs, two scalarization methods are investigated. To exemplify the importance of the MOOP, a case study on a battery-powered communication system is provided. It is shown that, compared to the classical fixed rate-power transmissions, the MOOP provides the optimum usage of the battery and increases the energy efficiency of the communication system while maintaining the constraints.
\end{abstract}

\begin{IEEEkeywords}
	URLLC, low-complexity receivers, channel coding, internet-of-things, order statistics decoder, multi-objective optimization.
\end{IEEEkeywords}

\section{Introduction}

The advent of 5G communication technologies will provide more than mobile broadband data transmission by supporting massive number of machine-type connections and allowing communication with stringent latency and reliability constraints. With these two new communication frameworks, namely the massive machine-type communication (MTC) and ultra-reliable low-latency communication (URLLC), 5G will not only improve the human-centric connection, but also create a new interconnection network for devices, objects, sensors, such as the Internet of Things (IoT). However, compared to the previous generations, this new network will require significantly different architecture and operation modes. For instance, the target end-to-end radio latency limit for 4G networks is $10\,$ms \cite{imt_requirements}. However, this requirement is further reduced to $ 1 \,$ms for URLLC applications with very high reliability constraints, i.e. maximum error probability of $ 10^{-5} $ \cite{3gpp_release15}. Furthermore, MTC related URLLC scenarios are among the most challenging since in real time applications without human intervention, e.g., industrial control, extremely stringent requirements on latency, in the order of a fraction of ms, and reliability, error probability less that $ 10^{-9} $, must be met. \cite{lema_business, pokhrel_toward, celebi_wireless_comm}. 

Although the capacity of a channel is often taken as a metric for reliable communication, it is not a suitable metric for latency-intolerant applications, since it gives the ultimate error-free transmission rate when the blocklength of the transmitted codeword goes to infinity \cite{shannon_a_mathematical}. Therefore, the outage capacity, which represents the maximal transmission rate such that the probability of the instantaneous mutual information being lower than the selected rate is not higher than some desired threshold, is sometimes studied to evaluate the performance of a latency constrained communication \cite{popovski_wireless_access}. However, the outage capacity is more accurate for arbitrarily long blocklengths, which makes it suitable for latency-tolerant applications. Nevertheless, there has been a significant amount of progress made in the area of non-asyptotic bounds in finite blocklength regime. Non-asymptotical achievability and converse bounds are derived in \cite{polyanskiy_channel_coding}, where a close approximation on the maximal achievable rate is also introduced. This approximation reveals that a desired error probability in finite blocklength can be achieved by introducing some amount of rate penalty from the asymptotic bound, where the amount of the penalty is related to the channel dispersion and blocklength. 

\subsection{Motivation} 

The bounds in finite blocklength regime show the theoretical limits. However, achieving these limits is still a challenge. One significant determinant of this problem is the selection of a proper channel encoder and decoder pair which can achieve performance levels close to the theoretical limits. Therefore, significant amount of work has been published in the recent years to investigate the performances of various coding schemes in finite blocklength regime to find the optimum selection for URLLC applications. Mainly, a list of some strong candidates for URLLC  follows \cite{zaidi_5g_physical, shivarnimoghaddam_short_block} (\textit{i}) Bose–Chaudhuri–Hocquenghem (BCH) codes with order-statistics (OS) decoder \cite{fossorier_soft_decision}, (\textit{ii}) binary or non-binary low-density parity-check (LDPC) codes with OS decoder or belief-propagation algorithm \cite{saeedi_performance}, (\textit{iii}) tail-biting convolutional codes (TBCC) with list or wrap-around Viterbi algorithms \cite{gaudio_on_the}, and (\textit{iv}) polar codes with cyclic-redundancy-check aided successive-list decoding \cite{niu_crc}. Empirical performance results for these coding schemes have been investigated and compared in \cite{liva_code_design, shivarnimoghaddam_short_block, celebi_wireless_comm}. It is shown that linear block codes with OS decoder and TBCC with Viterbi algorithm outperform and their performance approach to the non-asymptotic limits. Due to their performance in finite blocklength regime, these two coding schemes have gained interest of the research community. Bounds on the performance of OS decoders are derived in \cite{yue_a_revisit} and a novel low-complexity algorithm is proposed. Performance of TBCC is studied in \cite{gaudio_on_the} for URLLC applications and lists of generator polynomials leading to large minimum distance convolutional codes are presented. On the other hand, structural delays, which is a significant factor for low-latency applications, for linear block codes and convolutional codes are studied in \cite{hehn_ldpc}. It is shown that convolutional codes have considerable advantages in terms of lower structural delays. Possible implementation opportunities of polar codes in 5G applications are discussed in \cite{bioglio_design}. Furthermore, research on deep-learning based coding schemes for short blocklengths is also attracting significant interest as it is shown in \cite{jiang_latency_performance, nachmani_deep} that it is possible to create deep-learning based coding schemes for URLLC applications. Additionally, learning based decoder complexity reductions for convolutional codes and OS decoders are discussed in \cite{ma_statistical} and \cite{cavarec_a_learning}, respectively. 

In general, performance of a decoder comes with an unavoidable cost: its computational complexity. The trade-off between the performance and complexity yields many studies to mainly focus on decreasing the computational cost of a decoder. Such trade-off, for instance, can be observed in OS decoders and TBCCs by changing their order and memory sizes, respectively. It is shown that such decoders with higher order/memory perform better. Besides, their complexity, in terms of number of binary operations per-information-bit, exponentially increases. This trade-off has been investigated in \cite{celebi_latency_and} extensively and it is shown that decoding complexity has a considerable effect in URLLC applications when processing capabilities are taken into account. In this work, we further extent \cite{celebi_latency_and} and formulate a multi-objective optimization framework for URLLC applications with the goal of maximizing the throughput, in terms of transmitted information rate, and energy efficiency of the system together. Such a goal is crucial for the design of URLLC IoT use cases where reliable information transmission is required under latency constraint with an energy efficient communication protocols, since very long battery lifetimes are required \cite{azari_serving}. Thus, we identify the optimal selections of communication parameters  where constraints on both latency and reliability are met under decoding complexity constraint. Some of the similar works in the literature, where the optimal parameters for URLLC are studied, can be listed as follows. Optimum rate and power allocation for URLLC are investigated in \cite{lopez_joint_power}. This work has been extended to non-orthogonal multiple access networks to achieve a distributed rate control system design with reliability constraints \cite{lopez_distributed}. The trade-off between energy efficiency and reliability is studied in \cite{zogovic_energy} for low-power short-rage communication, where a multi-objective optimization problem (MOOP) has been formulated and achievability of the optimal parameters are studied. Similarly, optimum power allocations under reliability constraints for hybrid automatic repeat request schemes are presented in \cite{dosti_ultra}. However, no constraint on decoding complexity has been considered in these papers.

\subsection{Contributions}

This work differs considerably from the studies listed above as we take into account the decoding complexity as a system design parameter for URLLC applications. Such an additional parameter is relevant for most of the real-world applications and it is mostly the case for the URLLC IoT use cases \cite{5g_americas, celebi_low_latency, lopez_enabling}. The main contributions of this work can be mainly summarized as 
\begin{itemize}
	\item We extend our analysis in \cite{celebi_latency_and} where the trade-off between the performance of several decoders, that are possible candidates of URLLC, and their computational complexity is modeled. With the help of this model, in this paper, we show that under decoding complexity constraint on OS decoders, a back-off from the finite maximal achievable rate is required in order to meet the latency and reliability constraints. We quantify this back-off both in power and rate domains. Then, based on this back-off, the optimum transmission parameter choices are discussed thoroughly via a MOOP framework.
	\item Next, a MOOP for the optimum rate and power selection under latency, reliability, and decoding complexity constraints is formulated. The MOOP is then analytically solved and the attainable objective set and the set of optimal selections, namely the Pareto boundary, are derived.
	\item Two methods for scalarization of the MOOP, namely (\textit{i}) linear weighted-sum and (\textit{ii}) weighted Chebyshev objective functions, are introduced and their performances are compared. It is subsequently shown that based on the shape of the attainable objective set, some optimal points are not accessible, depending on the selected scalarization method. This phenomena is discussed thoroughly and we analytically derive the regions where the optimum points are not accessible.
	\item Finally, the importance of the MOOP is exhibited with a case study on battery-powered communication. It is shown that the MOOP increases the energy efficiency of the communication system while maintaining the constraints, compared to the classical fixed parameter transmissions. We also show that weighted Chebyshev objective function performs better than linear weighted-sum function in terms of energy efficiency.
\end{itemize}
 
Next, we discuss the system model and formulate the problem in Section II. The trade-off between excess power and decoding complexity is investigated for OS decoders in Section III. Then, in Section IV, we formulate a MOOP and solve it analytically. Two main scalarization techniques for the MOOP are introduced in Section V. Finally, the benefits of the MOOP is shown in a case study, where performance of a battery-powered communication is presented. 

\section{Problem Definition}\label{sec_system_model}

\subsection{System Model}

We consider communication over a discrete-time, binary-input AWGN (BI-AWGN) channel\footnote{We study a communication scenario that can be generalized to practical linear multidimensional modulation formats. The choice of the BI-AWGN channel does not restrict the generality of the study.}. $ k $ number of information bits are encoded with a linear block encoder and modulated to form the  $ n $ symbol codeword
\begin{equation}\label{key}
\boldsymbol{x} = [ x_1, x_2, \ldots, x_n ], ~~ x_i \in \{-1,+1\}, 
\end{equation}
The codeword $ \boldsymbol{x} $ is then transmitted over BI-AWGN channel and the ratio $ r=k/n $ is the code rate of the selected codebook. The observed sequence at the receiver is 
\begin{align}\label{eq_system_model}
\boldsymbol{y} = \sqrt{\rho}\boldsymbol{x} + \boldsymbol{z},
\end{align}
where $\rho$ denotes the signal-to-noise ratio (SNR) and $\boldsymbol{z}\sim\mathcal{N}(\boldsymbol{0},\boldsymbol{I}_n)$. 

We consider a communication scenario where data transmission is performed under latency, reliability, and decoding complexity constraints. The transmitter and receiver select a transmission pair, denoted as $ [r,\rho] $, according to the constraints. The receiver captures $ \boldsymbol{y} $ and selects a decoder that fulfills the constraints. The goal of this study is to investigate the optimum choice of transmission parameters that guarantees to fulfill the constraints and maximize the efficiency of the communication system in terms of the objective function, which will be discussed in the following Sections. 

Information theoretic analysis tells that there exists a maximal limit on transmission rate where reliable communication is possible, such that the codeword error probability (CEP) vanishes as $ n \rightarrow \infty $. This limit is termed as the channel capacity and denoted as $ C $. However, if the communication setup is restricted to have arbitrarily large values of $ n $, such restrictions are common for URLLC applications with stringent latency requirements, $ C $ overestimates the rate of reliable information and is not achievable. In this case the maximal transmission rate with codewords of length $ n $ with non-zero CEP $ \varepsilon $ can be closely approximated by \cite{polyanskiy_channel_coding}
\begin{equation}\label{eq_normal_approximation}
R(n,\rho,\varepsilon)=C-\sqrt{\frac{V}{n}}Q^{-1}(\varepsilon)\log e+O\left(\frac{\log n}{n}\right).
\end{equation}
where the quantity $ V $ is called the channel dispersion and $ Q^{-1}(\cdot) $ is the inverse of the Gaussian $ Q- $function.\footnote{All logarithms in this paper are with base 2.} The expression in \eqref{eq_normal_approximation} is termed as the normal approximation to the maximal transmission rate in finite blocklength regime. For BI-AWGN channel, $ C $ and $ V $ are
\begin{align}\label{eq_channel_capacity}
C &= \frac{1}{\sqrt{2\pi}} \int e^{-\frac{z^2}{2}}  \left( 1-\log\left( 1+e^{-2\rho+2z\sqrt{\rho}} \right)  \right) \mathrm{d}z,
\\
V &= \frac{1}{\sqrt{2\pi}} \int e^{-\frac{z^2}{2}}  \Big( 1 - \log\left( 1+e^{-2\rho+2z\sqrt{\rho}} \right)  - C \Big)^2 \mathrm{d}z.
\end{align}
The second term in \eqref{eq_normal_approximation} introduces a back-off from $ C $ to ensure the transmission achieves $ \varepsilon $ CEP with $ n $ blocklength. Hence, in order to meet the constraint on reliability, one can deduce that $ r \leq R(n,\rho,\varepsilon) $. However, this constraint is not enough to ensure that the latency constraint is fulfilled already under decoding complexity constraint, since, as will be shown throughout the paper, the computational complexity of the decoder exponentially increases as $ r $ approaches to $ R(n,\rho,\varepsilon) $.

\subsection{Aggregate Latency and Decoding Complexity} \label{sec_dec_comp_prop}

For notational consistency, given a fixed codebook containing $ 2^{k} $ codewords of length $ n $, we denote a decoder as $ D ( n, r, \rho) $, where it is meant that the decoder operates on the given codebook at a received SNR, $ \rho $, and decodes $ k $ number of information bits. The performance of the decoder is measured by its CEP, denoted as $ \varepsilon $. 

It is often encountered in the literature of latency-constrained communication that the aggregate latency equals to the time required for the transmission of a message over the communication channel. This implies that other operations necessary for the successful delivery of the information, e.g., the time required for the decoding, are assumed to happen instantaneously. However, this assumption is not true for receivers with computational complexity constraints. Therefore, we consider the aggregate latency $ L_a $ as   
\begin{equation}\label{eq_latency_constraint}
L_a = nT_s + L_d + L_q ,
\end{equation}
where $T_s$ is the symbol time and $ L_d$ and $ L_q $ denote the decoding latency and the duration for signal propagation and other auxiliary processes such as signal processing and consequent decision making tasks, respectively. Since $ L_q $ does not depend on encoding and decoding processes, we neglect it from further analysis. 

For simplicity and generality, a linear relation between $ L_d $ and the total number of binary operations of the decoding process  is assumed \cite{hwang_advanced_computer, celebi_latency_and}.  Denoting $T_b$ as the latency caused by a single binary operation at the receiver, the total transmission and decoding latency for the transmission of a codeword of blocklength $n$ can be written as\footnote{We assume that decoding starts right after the codework transmission.}
\begin{equation}
L_a = n T_s +  k T_b K(D) ,
\label{eq_total_latency_with_c}
\end{equation}
where $ K(D) $ represents the number of binary operations per-information-bit that is required to decode $ k $ number of information bits from an $ n $-length noisy codeword. Notice that $ kK(D) $ gives the total number of binary operations required for decoding.\footnote{A more accurate estimation on $ L_d $ can be done by investigating software optimization capabilities, memory timings, parallel computation, etc. But since these are not in the scope of this paper, we confine \eqref{eq_total_latency_with_c} for further analysis. Interested readers may refer to \cite{wilhelm_the_worstcase}}

Next, in order to address the relation of $ K(D) $ with other communication parameters, we define some easily verifiable properties of the considered codes by comparing the relative performance of two decoders from the same decoder family operating on the same codebook or on the sub-codebooks \cite{celebi_latency_and,sybis_channel_coding,savage_complexity_of}.
\begin{property}
	Let two decoders, $ {D}_1(n,r,\rho) $ and $ {D}_2(n,r,\rho) $ with CEP $ \varepsilon_1 $ and $ \varepsilon_2 $, respectively, operating on the same codebook with $ {K}({D}_1) \leq {K}({D}_2) $. It follows immediately from the decoder complexities that $ \varepsilon_1 \geq \varepsilon_2 $. Intuitively,  more complex decoder leads to lower CEP.
\end{property}
\begin{property}
	Let two decoders,  $ {D}_1(n,r,\rho_1) $ and $ {D}_2(n,r,\rho_2) $, operating on the same codebook with same complexity, $ {K}({D}_1) = {K}({D}_2) $, but different SNR levels such that $ \rho_1 \leq \rho_2 $. Then, it must be true that $ \varepsilon_1 \geq \varepsilon_2 $, since higher SNR leads to lower CEP.
\end{property}
\begin{property}
	Take the following two decoders, $ {D}_1(n,r_1,\rho) $ and $ {D}_2(n,r_2,\rho) $. Assume that $ {D}_1(n,r_1,\rho) $ is operating over a sub-codebook of $ {D}_2(n,r_2,\rho) $, where $ r_1 \leq r_2 $, at the same SNR and complexity, i.e. $ {K}({D}_1) = {K}({D}_2) $. It is true that due to the size of the sub-codebook, $ \varepsilon_1 \leq \varepsilon_2 $. Intuitively, more information leads to higher CEP.
\end{property}
\begin{property}
	For the two decoders stated in Property 3, $ {D}_1(n,r_1,\rho) $ and $ {D}_2(n,r_2,\rho) $, where $ r_1 \leq r_2 $, $ \varepsilon_1 = \varepsilon_2 $ can be achieved when $ {K}({D}_1) \leq {K}({D}_2) $. Thus, more information leads to higher complexity.
\end{property}

\subsection{Power Gap and Rate Gap}

Suppose that, given $ n $ and $ \varepsilon $, based on the SNR $ \rho $, the transmitter and receiver agree on a transmission rate that meets the CEP constraint. We denote this rate-power transmission pair as $ [r,\rho] $. The transmission efficiency, in terms of increasing information transmission per channel use, is maximized by selecting the codebook that can achieve $ r = R(n,\rho,\varepsilon) $. However, as shown in \cite{liva_code_design, shivarnimoghaddam_short_block, celebi_latency_and}, this selection can lead to a very complex decoder which is not practical for receivers having complexity constraints since it may violate the latency constraint or it will make the total latency too large. For instance, the optimum maximum likelihood decoder, which requires an exhaustive search over the codebook, implies exponential complexity in $ k $ which is not practical even for short blocklengths.

Suppose that the total latency $ L_a $ is bounded by a maximum allowed latency $ L_m $ as
\begin{equation}\label{eq_latency_deadline}
L_a \leq L_m ,
\end{equation}
where $ L_m $ represents the maximum allowed latency. This constraint imposes an upper bound on the per-information-bit decoder complexity such that
\begin{equation}\label{eq_upper_bound_on_K}
{K}({D}) \leq \kappa ,
\end{equation}
where
\begin{equation}\label{key}
\kappa = (kT_b)^{-1}[L_m-nT_s]^+ 
\end{equation}
and $ [z]^+ = \max\{0,z\} $. However, an upper bound on $ K(D) $ for the complexity of the decoder for fixed $ n $, $ r $ and $ \rho $ would inevitably lead to reduced reliability, i.e., increased CEP. One way to satisfy a desired target reliability is to spend some amount of excess power, named as the power penalty or to introduce some amount of rate back-off. Hence, an interesting, yet complex, relation between power, rate, aggregate latency, and decoding complexity arises. 

Next, we introduce two definitions that will be used in further analysis.
\begin{definition}[Power penalty, $ \Delta \rho $]
	Fix a codebook of $ 2^k $ codewords of blocklength $ n $. For a reference SNR, $ \rho_s $, consider the ML decoder that achieves a CEP of $ \varepsilon^* $ and the suboptimal decoder $ {D}(n,r,\rho) $ that achieves $ \varepsilon^* $ at SNR $ \rho $. The difference between $ \rho $  and $  \rho_s $ is the power penalty required, such that the suboptimal decoder can achieve the same CEP as the ML decoder.
\end{definition}

\begin{definition}[Rate gap, $ \Delta r $]
	Consider the ML decoder, operating on a codebook with $ 2^k $ codewords of blocklenth $ n $ at a coderate $ r $, achieves $ \varepsilon^* $ CEP at the reference SNR $ \rho_s $. It is true that one can further decrease the CEP to $ \varepsilon' $, where $ \varepsilon' < \varepsilon^* $, by omitting sufficient amount of codewords from the codebook and let the ML decoder to operate over a sub-codebook. Although this operation reduces the coderate to $ r' $, where $ r' < r $ , it also gives the flexibility of selecting a suboptimal decoder, $ {D}(n,r',\rho_s) $ with  $ \varepsilon^* $ but substantially lower decoding complexity that can decode $ k' = nr' $ number of information bits at SNR $ \rho_s $. Thus, the difference between $ r $  and $  r' $ is the rate gap, such that the suboptimal decoder can achieve the same CEP with the same SNR as the ML decoder by sacrificing some amount of coderate.
\end{definition}

\subsection{Problem Formulation}

Several optimization problems for URLLC with decoding complexity constraints have been introduced and solved in \cite{celebi_latency_and}. The main logic in all those optimization problems is minimizing or maximizing a single cost function of interest subject to a set of constraints. In those single-objective optimization problems there are two main assumptions such as (\textit{i}) the objective function has dominating importance, (\textit{ii}) a-priori information about the good values for the constraints is already known. However, in real life implementations of URLLC, these assumptions may not hold and various parameters are supposed to be optimized together. Here, we take our analysis in \cite{celebi_latency_and} one step further and set the optimal design of URLLC systems in a MOOP framework.

A general structure of a MOOP follows 
\begin{subequations}\label{eq_gen_moop}
	\begin{align}
	\underset{x}{\text{minimize}} & ~~ [f_1(x), ~ f_2(x), ~ \cdots, ~ f_m(x)] \label{eq_gen_moop_obj}
	\\
	\text{s.t.} & ~~ x \in X,
	\end{align}
\end{subequations}
where $X$ represents the set of available resources and the goal of the objective in \eqref{eq_gen_moop_obj} is to minimize all the $m$ number of objectives simultaneously \cite{bjornson_multiobjective,lei_overview}. Similar to \eqref{eq_gen_moop}, possible objectives for MOOPs for URLLC can be formulated by a valid combination of the following constraints, such as (\textit{i}) minimization of $ L_a $, (\textit{ii}) minimization of $ \varepsilon $, (\textit{iii}) minimization of $ \rho $, (\textit{iv}) minimization of $ K(D) $, and (\textit{v}) maximization of $ r $, where the optimization performs subject to constraints on the remaining parameters. Since maximization of $r$ and minimization of $\rho$ are one of the two key parameters of a communication system, in this paper, we select $m=2$ and focus on these two objectives. 


Suppose an $ n $-blocklength codeword that belongs to a codebook of size $ 2^{nr_s} $, where $ r_s $ stands for the reference transmission rate, is intended to be transmitted at a CEP $ \varepsilon $. If no constraints on latency and decoding complexity present or if $ T_b = 0 \, $s, which stands for infinite computation power, the reference transmission rate-power is $ [ r_s, \rho_s ] $,
where
\begin{equation}\label{eq_reference_power}
\rho_s = R^{-1}(n,r_s,\varepsilon) 
\end{equation}
represents the reference SNR. However, constraints may prevent to achieve $ [r_s, \rho_s ] $. Suppose, for instance, a selected transmission rate-power pair that meets the constraints is 
\begin{equation}\label{key}
[ r_s - \Delta r, ~ \rho_s + \Delta \rho ] .
\end{equation} 
This selection arises a very significant optimization problem, being ``\textit{Given a blocklength $ n $ and reference transmission rate-power pair, $ [ r_s, \rho_s ] $, under latency, reliability, and decoding complexity constraints, what is the optimum selection of power penalty and rate gap that allows the transmission to satisfy the constraints under some optimality criterion?}". This MOOP can be formulated as the following

\begingroup
\allowdisplaybreaks
\begin{subequations}
	\label{eq_mul_opt_prob}
	\begin{align}
	\underset{k,\varepsilon}{\text{minimize}} & ~~ [\Delta r, ~ \Delta \rho]   \label{eq_mul_opt_problem_1_obj}
	\\
	\text{s.t.}  & ~~ \varepsilon \leq \varepsilon_m , \label{eq_mul_opt_problem_1_const_1}
	\\
	& ~~ K(D) \leq \kappa, \label{eq_mul_opt_problem_1_const_2}
	\\
	& ~~ 0 \leq \Delta \rho, \label{eq_mul_opt_problem_1_const_3}
	\\
	& ~~ 0 \leq \Delta r \leq r_s, \label{eq_mul_opt_problem_1_const_4}
	\\
	& ~~ k \leq n,  \label{eq_mul_opt_problem_1_const_5}
	\end{align}
\end{subequations}
\endgroup
where \eqref{eq_mul_opt_problem_1_const_1} and \eqref{eq_mul_opt_problem_1_const_2} represent the reliability and latency constraints, respectively, and $ \varepsilon_m $ is the maximum allowed CEP of the decoder. \eqref{eq_mul_opt_problem_1_const_3} and \eqref{eq_mul_opt_problem_1_const_4} are the numerical constraints on the variables. Since the hardware platform is assumed to be fixed in \eqref{eq_mul_opt_prob}, $ T_b $ and $ T_s $ are fixed. Notice that the overall goal of this MOOP is to achieve the ultimate point $[0,0]$. However, although selecting $ \Delta \rho = 0 $ and $ \Delta r = 0 $ is theoretically achievable, it requires very complex decoder to achieve the desired CEP. On the other hand, selecting $ \Delta \rho > 0 $ and $ \Delta r > 0 $ yields a reduction in decoder complexity, but it will also cause performance degradation in the transmission efficiency since the rate-power pair is receding from the ultimate point $[0,0]$.

\section{Power Penalty vs Decoding Complexity} 

In order to address the MOOP in \eqref{eq_mul_opt_prob}, the relation between the power gap versus the decoding complexity must be studied thoroughly. From now on, we focus our analysis on linear block codes with OS decoders, due to the following reasons; \textit{i}) it has been shown that these codes come very close to the information-theoretic bounds for finite $ n $ \cite{liva_code_design,zaidi_5g_physical}, \textit{ii}) OS decoders are easy to describe with a few parameters., \textit{iii}) their decoding performance can be easily parameterized by a single parameter, i.e., the order of the decoder, $ s \in \mathbb{Q} $, and finally, \textit{iv}) operations that are executed during decoding can be accurately tracked and the decoding complexity can be intuitively described.

\subsection{OS Decoders}

OS decoders are universal soft-decision decoder for linear block decoders and they are efficient in reducing the complexity of the decoding process by making soft decisions on a set of test codewords. Implementation of the OS decoders can be summarized as:
\begin{itemize}
	\item Find a permutation function $ \lambda(\boldsymbol{y}) $ that sorts the received vector, $ \boldsymbol{y} $, with respect to the absolute amplitude values.
	
	\item Using the permutation function $ \lambda(\cdot) $, reorder the columns of the generator matrix, $ \boldsymbol{G} $, and apply the Gauss-Jordan elimination to form the new systematic generator matrix $ \boldsymbol{G}_\lambda $.
	
	\item Set the list of test error patterns (TEPs), denoted by $ T $, such that it includes all possible length$ -k $ binary sequences with Hamming distance less than or equal to $ s $ and search over the list to find the error sequence that maximizes the likelihood of the codeword to the hard decoded $ \lambda(\boldsymbol{y}) $. 
\end{itemize} 
Overall, the decoding can be formulated as
\begin{align}
\boldsymbol{r}^s &= \underset{ \left\lbrace \boldsymbol{r}: \: \boldsymbol{r} = (h(\lambda(\boldsymbol{y})) \oplus \boldsymbol{t}) \otimes \textbf{G}_\lambda, \: \boldsymbol{t}\in T \right\rbrace }{\text{arg max}}  \mathbb{P}(\boldsymbol{r}|h(\lambda(\boldsymbol{y})) ,
\\
&=\underset{ \left\lbrace \boldsymbol{r}: \: \boldsymbol{r} = (h(\lambda(\boldsymbol{y})) \oplus \boldsymbol{t}) \otimes \textbf{G}_\lambda, \: \boldsymbol{t}\in T \right\rbrace }{\text{arg min}}  \| \boldsymbol{r} - h(\lambda(\boldsymbol{y})) \|_2 ,
\label{eq:MLDecoder}
\end{align}
where $h(\cdot)$ represents the hard-decoding function. The output of the order$ -s $ decoder is $ \hat{\boldsymbol{r}}^s=\lambda^{-1}\left( \boldsymbol{r}^s \right) $.

\subsection{Computational Complexity}

The search space for the most probable codeword is controlled by limiting the size of the list $ T $, which is related with the choice of the order $ s $. The total number of possible binary error vectors in TEP for an order$ -s $ OS decoder, when $ s $ is integer, is
\begin{equation}\label{key}
|T| = \sum_{i=0}^{s} {k\choose i}. 
\end{equation}
Therefore, the cardinality of $ T $ grows exponentially in $ s $. Notice that all possible codeword comparisons will be performed when $ s = k $ and the performance and complexity of the OS decoder will be identical to the ML decoder. The number of binary operations per-information-bit of an OS decoder can be calculated by \cite{celebi_latency_and}
\begin{equation}\label{eq_complexity_of_os_dec}
K( D ) = \frac{\log(n)}{r} + nk + \frac{1}{2} |T| \left( n - q + \frac{qn}{k} \right) ,
\end{equation}
where $ q $ represents the number of quantization bits used to represent the symbols of $ \boldsymbol{y} $ a real number. The terms in \eqref{eq_complexity_of_os_dec} represent the binary complexity values of sorting $ \boldsymbol{y} $, Gauss-Jordan elimination of the permuted $ \boldsymbol{G} $, and order$ -s $ reprocessing, respectively. Notice that the limiting complexity order of OS decoder is $ \mathcal{O}(nk^s) $.

Next, we validate the properties listed in Sec. \ref{sec_dec_comp_prop} by analyzing the empirical performance of OS decoders with various orders and transmission rates. 
In Fig. \ref{fig_CER_n_128_k_64_71}, performance results of OS decoders with orders $ s=\{0,1,2\} $ where $ n=128 $ and $ k=\{64, 71\} $ are depicted with dotted lines. The extended BCH code \cite{reed_error_control} is used for the encoding. The error bounds, shown with solid lines, are calculated from \eqref{eq_normal_approximation}. Property 1 and 2 can be seen by fixing $ k $ and letting the order$ -s $ to vary. It is seen that as the order increases, the performance of the decoder improves and approaches the optimal decoder, albeit at the expense of higher decoding complexity. Similar results are also achieved by fixing $ s $ and let SNR vary.  Property 3 can be illustrated by fixing an order and letting $ k $ to vary. It can be seen that for fixed SNR, the decoder with $ k=71 $ has worse performance than the decoder with $ k=64 $ and, from \eqref{eq_complexity_of_os_dec}, it also has higher per-information-bit complexity.

\begin{figure}[t]
	\centering
	\whencolumns{
		\includegraphics[width=.6\linewidth]{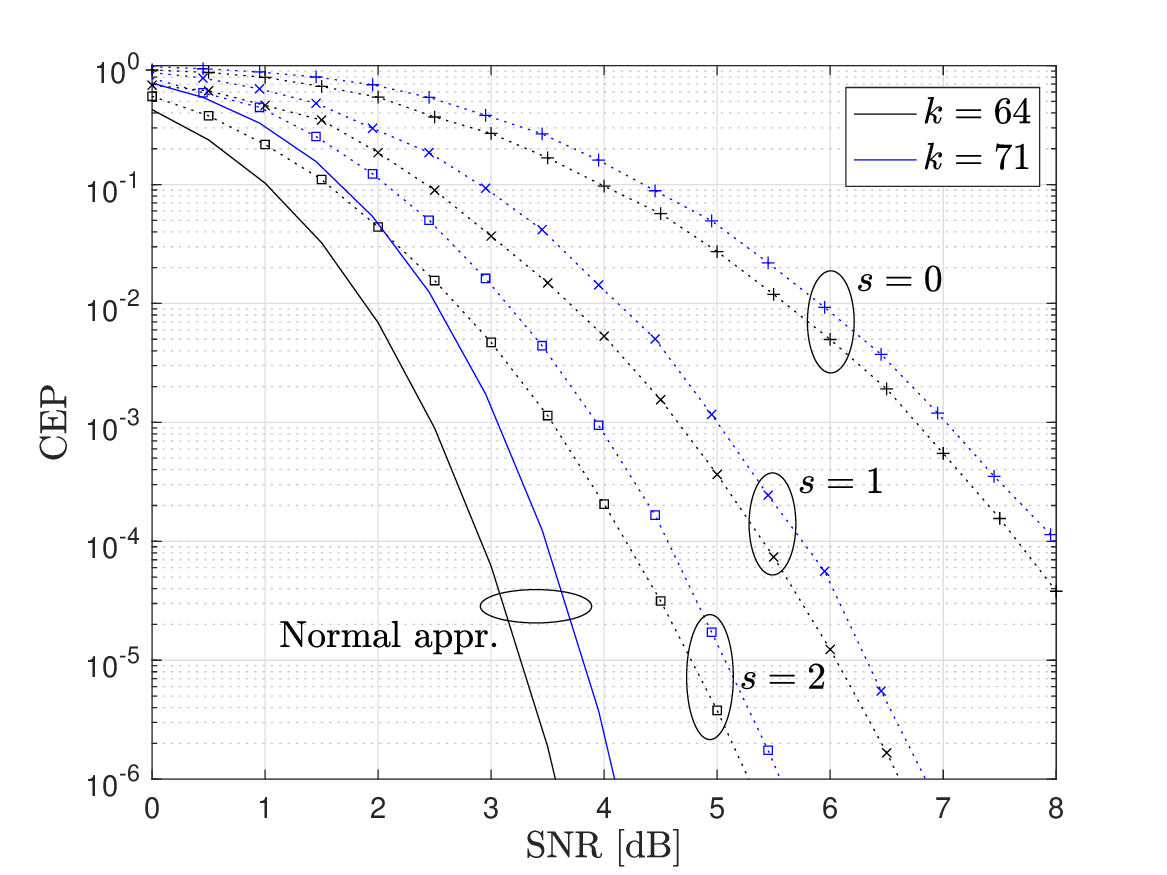}
	}{
		\includegraphics[width=1\linewidth]{figures/CER_n_128_k_64_71.eps}
	}
	\caption{CEP performace of OS decoders with $ n=128 $ and $ k=64 $ (black) and $ k=71 $ (blue) number of information bits at different orders compared to the the normal approximation CEP bound for BI-AWGN channel.}
	\label{fig_CER_n_128_k_64_71}
\end{figure}

\subsection{Modeling the Power Penalty}
\label{sec_model_power_penalty}

One can obtain $ \Delta {\rho} $ by simply calculating the difference between $ \rho' $ and $ \rho_s $, where $ \rho' $ is the the minimum SNR that is required for a particular OS decoder to achieve the target CEP, $ \varepsilon_m $. Thus, $ \rho' $ can be computed by solving the following optimization problem
\begin{equation}
\rho' = \underset{\{\rho \in \mathbb{R^+}, ~ \varepsilon \leq \varepsilon_m \}}{\mathrm{min}} \rho. 
\end{equation} 
Although tight upper bounds on $ \rho' $ have been derived in \cite{fossorier_soft_decision} and \cite{dhakal_on_the}, computation of $ \Delta {\rho} $ is not trivial for OS decoders since no closed-form expression of $ \rho' $ is available. 

On the other hand, it is shown in \cite{liva_code_design, shivarnimoghaddam_short_block, lian_performance, celebi_low_latency, maiya_coding_with} that for several types of coding schemes, such as linear block codes, polar codes, convolutional codes, etc., the relation between computational complexity and power penalty for fixed $ n $ subject to reliability constraint in the short block-length regime follows an exponential relation, where the decoding complexity exponentially increases as the code approaches the maximal achievable bound in  \eqref{eq_normal_approximation}. This behavior of computational complexity of the OS decoder and its power gap has been modeled in \cite{celebi_low_latency} and \cite{celebi_latency_and} by a law of the type
\begin{equation}
F(\Delta \rho) = \left( a \sqrt{\Delta \rho} + b \right)^{-1},
\label{eq_q_delta_tradeoff}
\end{equation}
with appropriate choices of the constants $a > 0 $ and $b > 0 $, where $ F(\Delta \rho) $ gives a close approximation of the logarithm of per-information-bit complexity of OS decoder that achieves the desired CEP. The main advantage of \eqref{eq_q_delta_tradeoff} is that it describes a tractable way of the trade-off between decoding complexity and power penalty for practical OS decoders with linear block codes. With the help of this model, given $ \Delta \rho $, it is possible to get a close approximation of the minimum per-information-bit complexity of the OS decoder that meets the desired CEP. A realization of $ F(\Delta \rho) $ is depicted in Fig. \ref{fig_power_gap_k_64_71}, where logarithm of per-information-bit complexities of the two OS decoders with orders $ s=\{0,1,2,3,4,5\} $ are shown with respect to their power gap values. It can be seen that the proposed model with $ a=0.029 $ and $ b=0.03 $ can closely describe the trade-off.

\begin{figure}[t]
	\centering
	\whencolumns{
		\includegraphics[width=.6\linewidth]{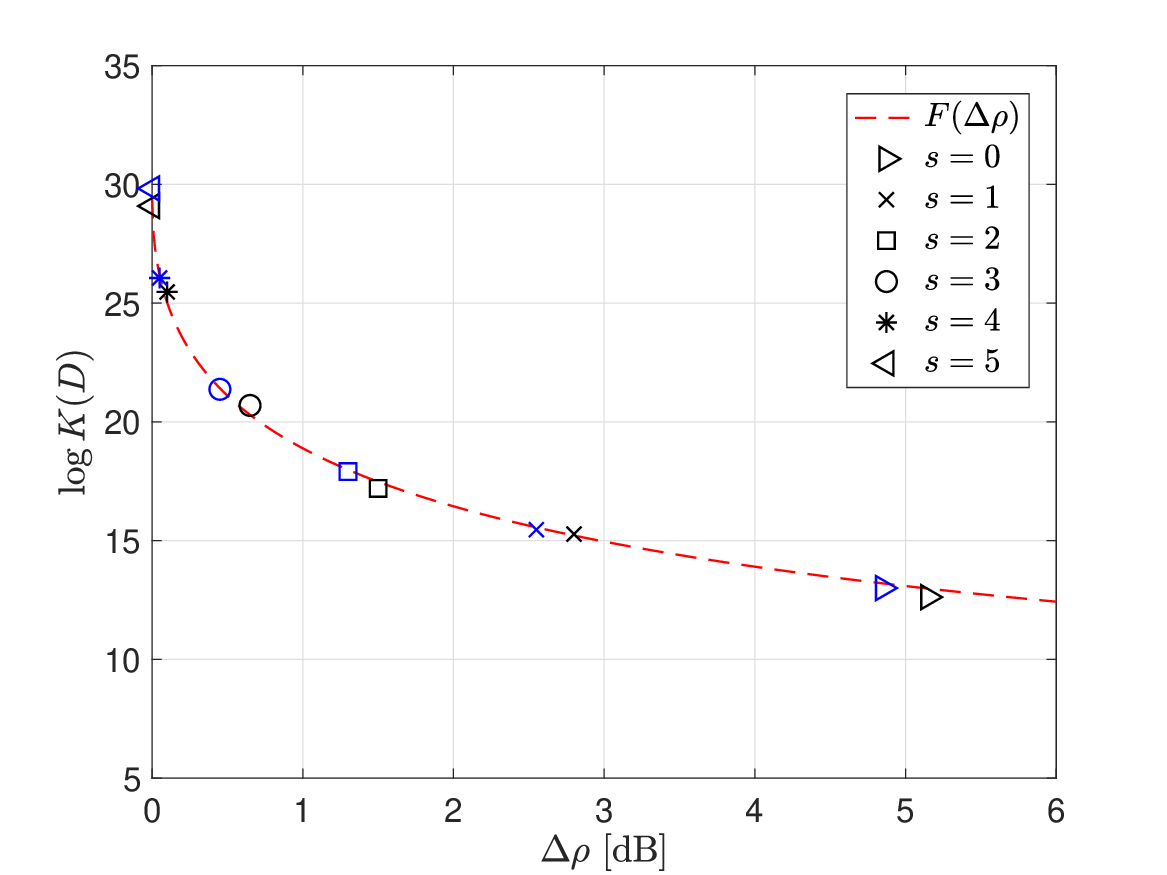}
	}{
		\includegraphics[width=1\linewidth]{figures/power_gap_k_64_71.eps}
	}
	\caption{Power penalty values of OS decoders at different orders versus their complexities for $ k=64 $ (black markers) and $ k=71 $ (blue markers), where $ n=128 $ and $ \varepsilon=10^{-5} $.}
	\label{fig_power_gap_k_64_71}
\end{figure}

Next, we show the maximal achievable information rate for a complexity constrained receiver with OS decoder under latency and reliability constraints.
\begin{lemma} \label{lemma_max_inf_rate}
	For a complexity constrained receiver with OS decoder and aggregate latency, expressed in \eqref{eq_total_latency_with_c}, the maximal achievable information rate subject to latency, $L_a<L_m$, and reliability constraints can be expressed as 
	\begin{equation}\label{eq_maximal_rate_with_latency}
	M(n,\rho, \varepsilon) = R(n, \rho - \Delta \rho^{\min}, \varepsilon) ,
	\end{equation}
	where $ M(n,\rho, \varepsilon) $ represents the maximal information rate under latency and complexity constraints and $ \Delta \rho^{\min} $ is the minimum amount of excess power that is required to fulfill the constraints and can be computed as  
	\begin{equation}\label{eq_min_req_power_penalty}
	\Delta \rho^{\min} = \left( (a\log\kappa)^{-1}[1- b\log\kappa]^+  \right)^2 .
	\end{equation}
\end{lemma}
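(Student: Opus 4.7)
My plan is to convert the latency/complexity budget into a minimum required power penalty through the empirical model of Section \ref{sec_model_power_penalty}, and then insert this penalty into the finite-blocklength rate formula \eqref{eq_normal_approximation}. The starting point is already established: the constraint $L_a \le L_m$, combined with \eqref{eq_total_latency_with_c}, imposes the per-information-bit complexity cap $K(D) \le \kappa$ stated in \eqref{eq_upper_bound_on_K}. The crux is to translate this cap into a bound on the excess SNR $\Delta\rho$ that an OS decoder achieving CEP $\varepsilon$ must spend, and then to exploit the monotonicity of $R$ to pick the best admissible $\Delta\rho$.

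To carry out the translation I would invoke \eqref{eq_q_delta_tradeoff}, under which $\log K(D) = (a\sqrt{\Delta\rho}+b)^{-1}$ for an OS decoder meeting the target CEP. Taking the logarithm of the complexity cap yields $(a\sqrt{\Delta\rho}+b)^{-1} \le \log\kappa$. Since the map $\Delta\rho \mapsto (a\sqrt{\Delta\rho}+b)^{-1}$ is strictly decreasing on $[0,\infty)$ for $a,b>0$, this is equivalent to $a\sqrt{\Delta\rho}+b \ge 1/\log\kappa$, which in turn gives $\sqrt{\Delta\rho} \ge (1-b\log\kappa)/(a\log\kappa)$. When the right-hand side is negative, the complexity budget is already loose enough to allow $\Delta\rho = 0$; handling both cases via $[\cdot]^+$ and squaring produces exactly \eqref{eq_min_req_power_penalty}.

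With $\Delta\rho^{\min}$ in hand, the final step is to argue that choosing $\Delta\rho = \Delta\rho^{\min}$ maximizes the rate and yields \eqref{eq_maximal_rate_with_latency}. By the definition of the power penalty given earlier in the paper, an OS decoder at received SNR $\rho$ with excess power $\Delta\rho$ matches the CEP of an ML decoder at effective SNR $\rho-\Delta\rho$, so the largest rate admissible under the reliability constraint is $R(n,\rho-\Delta\rho,\varepsilon)$. This quantity is increasing in its SNR argument (through both $C$ and $-\sqrt{V/n}\,Q^{-1}(\varepsilon)$ in \eqref{eq_normal_approximation}), so the maximum over the admissible set $\Delta\rho \ge \Delta\rho^{\min}$ is attained at $\Delta\rho = \Delta\rho^{\min}$, giving $M(n,\rho,\varepsilon) = R(n,\rho-\Delta\rho^{\min},\varepsilon)$.

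The main obstacle I anticipate is the use of the empirical fit \eqref{eq_q_delta_tradeoff} as if it were an identity: strictly speaking $F$ only approximates $\log K(D)$, so the resulting $\Delta\rho^{\min}$ inherits this approximation. I would address this by explicitly framing the lemma within the modeling assumption of Section \ref{sec_model_power_penalty}, and by noting that $F$ is continuous, monotone, and invertible on the relevant domain so that the inversion is unambiguous. A secondary, routine point is the implicit feasibility assumption $\rho \ge \Delta\rho^{\min}$, without which no admissible transmission pair exists; this can be stated as a hypothesis of the lemma rather than argued.
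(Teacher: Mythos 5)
Your proposal is correct and follows essentially the same route as the paper: convert $L_a \le L_m$ into the complexity cap $K(D)\le\kappa$, invert the empirical law \eqref{eq_q_delta_tradeoff} to obtain $\Delta\rho^{\min}$ in \eqref{eq_min_req_power_penalty}, and use monotonicity of $R$ in SNR to conclude \eqref{eq_maximal_rate_with_latency}. Your write-up is in fact more explicit than the paper's (which only gestures at the inversion and at the intermediate bound on the order $s$), and your remarks on the model being an approximation and on feasibility are reasonable additions rather than deviations.
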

\begin{proof}
	Lemma \ref{lemma_max_inf_rate} can be proven in accordance with Lemma 1 and 2 in \cite{celebi_latency_and}. For the purpose of completeness of this paper, we repeat the proofs. For fixed rate and blocklength $n$ the maximum allowable decoding time can be calculated using \eqref{eq_total_latency_with_c}. This in turn yields an upper bound on $ K(D) $ as given in \eqref{eq_upper_bound_on_K}. Bounding the complexity restricts the order$ -s $ as follows
	\begin{equation}
	s \leq s_m = \underset{\{s| s \in \mathbb{Q^+}, ~ L_a \leq L_m \}}{\mathrm{arg~max}} {K}({D}),
	\label{eq_general_bound_on_s}
	\end{equation}
	where $ s_m $ represents the maximum allowed decoder order$ -s $ that satisfies the decoding complexity. Although this restriction can be used to control the latency of decoding duration, the expense would be the reduced reliability, which can be satisfied by paying some amount of excess power. Thus, using \eqref{eq_q_delta_tradeoff} and \eqref{eq_upper_bound_on_K} the minimum amount of required power penalty can be computed as given in \eqref{eq_min_req_power_penalty}. Finally,  $ M(n,\rho, \varepsilon) $ can be determined by calculating $ \Delta \rho^{\min} $ for $ r \in [0,1] $ and shifting $ R(n, \rho, \varepsilon) $ by $ \Delta \rho^{\min} $ to the right.
\end{proof}

\begin{remark} \label{remark_power_gap}
Notice that $ \Delta \rho^{\min} $ is proportional to the processor capabilities, the latency requirements, blocklength $ n $, and coderate $ r $. For fixed $ n $ and $ T_s $, as $ T_b $ decreases, i.e. more powerful processor is implemented at the receiver, $ \Delta \rho^{\min} $ decreases and the gap to the normal approximation shrinks and disappears when  
\begin{equation}\label{key}
T_b \leq \frac{1}{k}2^{1/b}[L_m-nT_s]^+ .
\end{equation}
On the other hand, for fixed $ n $, if the transmission rate $ r $ increases, $ \Delta \rho^{\min} $ also increases and the gap to the normal approximation widens.
\end{remark}

\begin{lemma} \label{lemma_max_inf_rate_w_rate_gap}
	Lemma \ref{lemma_max_inf_rate} can also be introduced in terms of rate gap such as
	\begin{equation}\label{eq_min_req_rate_gap}
		M(n,\rho, \varepsilon) = R(n,\rho,\varepsilon) - \Delta r^{\min},
	\end{equation}
	where $ \Delta r^{\min} $ represents the minimum amount of rate penalty that is required to be paid to guarantee the desired CEP for fixed $ n $ and $ \rho $ under the latency, complexity, and reliability constraints.
\end{lemma}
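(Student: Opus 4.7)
The plan is to derive Lemma \ref{lemma_max_inf_rate_w_rate_gap} as a straightforward reparameterization of Lemma \ref{lemma_max_inf_rate}. Lemma \ref{lemma_max_inf_rate} already supplies a closed-form description of $M(n,\rho,\varepsilon)$ through a horizontal (SNR-axis) shift of the normal approximation curve by $\Delta\rho^{\min}$; the present statement merely re-expresses that same curve value as a vertical (rate-axis) shift of $R(n,\rho,\varepsilon)$ by some quantity $\Delta r^{\min}$. Because both representations describe the same point $(\rho, M(n,\rho,\varepsilon))$ on the complexity-constrained frontier, the proof reduces to identifying $\Delta r^{\min}$ and verifying that it is non-negative and minimal.

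First I would invoke Lemma \ref{lemma_max_inf_rate} to write $M(n,\rho,\varepsilon)=R(n,\rho-\Delta\rho^{\min},\varepsilon)$ with $\Delta\rho^{\min}$ given by \eqref{eq_min_req_power_penalty}. Second, I would define
\begin{equation}
\Delta r^{\min} := R(n,\rho,\varepsilon)-R(n,\rho-\Delta\rho^{\min},\varepsilon),
\end{equation}
so that the identity \eqref{eq_min_req_rate_gap} holds by construction. Third, I would argue that $\Delta r^{\min}\ge 0$: this follows because $R(n,\rho,\varepsilon)$ is monotonically non-decreasing in $\rho$, which in turn is a direct consequence of the BI-AWGN capacity and dispersion expressions in \eqref{eq_channel_capacity}–(6) — $C$ is increasing in $\rho$ and the $O(\log n/n)$ term is negligible in the regime of interest, while $\Delta\rho^{\min}\ge 0$ by \eqref{eq_min_req_power_penalty}. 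Fourth, I would justify minimality: any feasible transmission pair with rate gap strictly less than $\Delta r^{\min}$ would correspond, via monotonic inversion of $R(n,\cdot,\varepsilon)$, to a power penalty strictly smaller than $\Delta\rho^{\min}$, which by \eqref{eq_q_delta_tradeoff} and Lemma \ref{lemma_max_inf_rate} would force $K(D)>\kappa$ and violate \eqref{eq_upper_bound_on_K}. Hence no smaller rate gap is admissible, so the defined quantity is indeed the minimum.

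The main obstacle I anticipate is the \emph{minimality} step rather than the algebraic rearrangement itself. Specifically, one has to cleanly connect the complexity-power trade-off in \eqref{eq_q_delta_tradeoff}, which parameterizes the feasibility frontier in the $(\Delta\rho, K)$ plane, to a feasibility statement in the $(\Delta r, K)$ plane via the strict monotonicity of $R(n,\rho,\varepsilon)$ in $\rho$. Once that monotonic correspondence is established, the duality between the power-gap and rate-gap formulations is immediate and the stated identity \eqref{eq_min_req_rate_gap} follows by substituting the defining expression for $\Delta r^{\min}$ into the formula from Lemma \ref{lemma_max_inf_rate}.
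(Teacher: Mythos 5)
Your proposal is correct, and it is worth noting that the paper itself offers \emph{no} proof of this lemma at all: it is stated as an immediate reformulation of Lemma~\ref{lemma_max_inf_rate} and followed directly by a remark. Your explicit construction --- defining $\Delta r^{\min} := R(n,\rho,\varepsilon)-R(n,\rho-\Delta\rho^{\min},\varepsilon)$ so that \eqref{eq_min_req_rate_gap} holds by substitution into \eqref{eq_maximal_rate_with_latency}, then checking non-negativity via monotonicity of $R$ in $\rho$ and minimality via the complexity constraint --- is exactly the argument the authors leave implicit, and it matches how they use the quantity later (e.g., $r_s-\Delta r_s^{\min}=M(n,\rho_s,\varepsilon_m)$ in \eqref{eq_opt_trans_pair_alpha_0}). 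Two small refinements would tighten it. First, your minimality step detours through inverting $R(n,\cdot,\varepsilon)$ back to a power penalty; the more direct route is that Lemma~\ref{lemma_max_inf_rate} already establishes $M(n,\rho,\varepsilon)$ as the \emph{maximal} feasible rate at SNR $\rho$ under the latency, reliability, and complexity constraints, so the minimal admissible vertical gap from $R(n,\rho,\varepsilon)$ is $R(n,\rho,\varepsilon)-M(n,\rho,\varepsilon)$ by definition, with no inversion needed. Second, be aware that the ``horizontal shift'' $\Delta\rho^{\min}$ in \eqref{eq_min_req_power_penalty} depends on $\kappa$ and hence on $k=nr$, so $M$ is not a rigid translate of $R$ and the correspondence between rate gap and power penalty is pointwise rather than global; this does not break your argument (you only need the identity at each fixed $\rho$), but it is the reason the lemma is phrased ``for fixed $n$ and $\rho$,'' and your write-up should acknowledge it rather than treat the shift as a constant.
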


\begin{remark}
	\eqref{eq_q_delta_tradeoff} is in accordance with the Properties listed in Sec. \ref{sec_dec_comp_prop}. For instance, using the statements in Properties 1 and 2, it is true that in order to achieve a desired CEP, it is possible to use a decoder with lower computational complexity as $ \Delta \rho $ increases. Similar results for $ \Delta r $ can also be deduced using Properties 3 and 4.
\end{remark}

\section{Optimal Rate-Power Selection}

The MOOP can now be reformulated for the OS decoder with the following additional constraints
\begingroup
\allowdisplaybreaks
\begin{subequations}
	\label{eq_mul_opt_prob_2}
	\begin{align}
	\underset{k,\varepsilon,s}{\text{minimize}} & ~~ [\Delta r, ~ \Delta \rho]
	\\
	\text{s.t.}  & ~~ \eqref{eq_mul_opt_problem_1_const_1}, \,\eqref{eq_mul_opt_problem_1_const_2}, \,\eqref{eq_mul_opt_problem_1_const_5},
	\\
	& ~~ 0\leq s\leq k, \label{eq_mul_opt_problem_2_const_6}
	\\
	& ~~ r_m \leq  r_s - \Delta r \leq r_s, \label{eq_mul_opt_problem_2_const_7}
	\\
	& ~~ \rho_s \leq \rho_s + \Delta \rho \leq \rho_m. \label{eq_mul_opt_problem_2_const_8}
	\end{align}
\end{subequations}
\endgroup
where \eqref{eq_mul_opt_problem_2_const_6} is the constraint on the order$ -s $. Additionally, we also introduce two constraints on rate and power. \eqref{eq_mul_opt_problem_2_const_7} represents the constraint on transmission rate selection, such that the rate cannot be below some value, denoted as $ r_m $. Similarly, \eqref{eq_mul_opt_problem_2_const_8} represents the constraint on power budget, such that the maximum power cannot be above some value, denoted as $ \rho_m $.

The MOOP introduced in \eqref{eq_mul_opt_prob_2} is a challenging problem due to the nonlinear relations between latency, reliability and power requirements. An exhaustive search over OS decoders with various orders at different rates is on the other hand a very complicated and inefficient solution to the problem. In this section, we investigate \eqref{eq_mul_opt_prob_2} and show the optimum solution. We first start with the following Lemma.

\begin{lemma}\label{lem_opt_w_equal_eps}
	The optimum is achieved with equality in \eqref{eq_mul_opt_problem_1_const_1}. 
\end{lemma}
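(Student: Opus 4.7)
The plan is to establish the claim by contradiction, exploiting Property~2, which states that for a fixed codebook and fixed decoder complexity, the CEP is monotonically non-increasing in the SNR. The key observation is that the reliability constraint \eqref{eq_mul_opt_problem_1_const_1} is the only constraint in \eqref{eq_mul_opt_prob} that ties the objective $\Delta\rho$ to the other variables; the complexity bound $K(D)\leq\kappa$ depends on $n$, $k$, $s$, and $q$ only, and so is insensitive to small perturbations in $\rho$.

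Concretely, suppose that $(\Delta r^{\star},\Delta\rho^{\star})$ is a Pareto-optimal solution of \eqref{eq_mul_opt_prob} attained by a decoder $D^{\star}(n,r_s-\Delta r^{\star},\rho_s+\Delta\rho^{\star})$ with $E(D^{\star})=\varepsilon^{\prime}<\varepsilon_m$. I would then consider the perturbed transmission pair $(\Delta r^{\star},\Delta\rho^{\star}-\delta)$ with $\delta>0$ small. Since $\Delta r^{\star}$ is unchanged, the codebook, the order $s$, and hence $K(D)$ are unchanged as well, so \eqref{eq_mul_opt_problem_1_const_2} and \eqref{eq_mul_opt_problem_1_const_5} remain satisfied. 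By Property~2, decreasing the SNR by $\delta$ can only increase the CEP, but by continuity of the CEP in the SNR, for $\delta$ sufficiently small the perturbed CEP remains below $\varepsilon_m$, so \eqref{eq_mul_opt_problem_1_const_1} is still met. The numerical bounds \eqref{eq_mul_opt_problem_1_const_3}--\eqref{eq_mul_opt_problem_1_const_4} are trivially preserved for $\delta<\Delta\rho^{\star}$. The new point is therefore feasible with the same $\Delta r^{\star}$ and strictly smaller $\Delta\rho$, Pareto-dominating $(\Delta r^{\star},\Delta\rho^{\star})$ and contradicting its optimality. Hence any Pareto-optimal point must satisfy $E(D)=\varepsilon_m$.

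The only subtlety I expect is the boundary case $\Delta\rho^{\star}=0$, in which case the perturbation above is not admissible. In that situation I would instead perturb the rate gap: by Property~3/4, shrinking $\Delta r$ (i.e., enlarging the codebook) at fixed complexity and SNR can only increase the CEP, and again by continuity a sufficiently small reduction in $\Delta r^{\star}$ keeps the CEP below $\varepsilon_m$ while still respecting the complexity and numerical constraints, once more dominating the assumed optimum. Thus in either case a strict inequality in \eqref{eq_mul_opt_problem_1_const_1} leads to a contradiction, which proves the lemma.
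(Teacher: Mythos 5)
Your proof is correct and follows essentially the same route as the paper's: a contradiction argument that perturbs one coordinate of a supposed optimum with slack in the CEP constraint to obtain a feasible, Pareto-dominating point, invoking the monotonicity of the CEP in SNR and rate. The only cosmetic difference is that you keep the same decoder and appeal to continuity of its CEP, whereas the paper asserts the existence of a (possibly different) decoder achieving exactly $\varepsilon_m$ at the perturbed point; both versions gloss equally over the discreteness of $k=nr$ in the rate-perturbation step.
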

\begin{proof}
	We prove the lemma by using a similar analogy from \cite[Lemma 6]{celebi_latency_and}. Without loss of generality let us first assume that $ \Delta r $ is selected to be the optimum, $ \Delta r  = \Delta r^*$, and we focus on the minimization of $ \Delta \rho $. For fixed $ \rho_s $, given that $ R(n,\rho_s,\varepsilon) \leq R(n,\rho_s,\varepsilon_m)$, the feasible set for $ n $ becomes the largest for $ \varepsilon=\varepsilon_m$. Also assume that the optimal decoder is $ D^*(n,r_s-\Delta r^*,\rho_s+\Delta \rho^*) $ with $\varepsilon^*<\varepsilon_m$. However, for some $ \Delta \rho^* \geq \sigma > 0 $ small enough, one can find a decoder $ D'(n,r_s-\Delta r^*,\rho_s+\Delta \rho^*-\sigma) $ that can achieve $ \varepsilon_m $ which requires lower SNR than the optimal one without violating the CEP constraint, which contradicts with the assumption.
	
	Similarly, now, we assume that $ \Delta \rho $ is optimum, $ \Delta \rho = \Delta \rho^* $, and the search is over $ \Delta r $. Suppose that the optimal decoder is $ D^*(n,r_s-\Delta r^*,\rho_s+\Delta \rho^*) $  with $\varepsilon^*<\varepsilon_m$. However, for some $ \alpha > 0 $ small enough, one can find a decoder $ D'(n,r_s-\Delta r^*+\alpha,\rho_s+\Delta \rho^*) $ with CEP $ \varepsilon_m $ that has lower rate gap, i.e. higher number of information bits transmitted with higher error rate but still does not violate the CEP constraint, which contradicts with the assumption. Hence, the optimum is achieved with equality in \eqref{eq_mul_opt_problem_1_const_1}.
\end{proof}

Next, we focus on the latency constraint, $ L_a \leq L_m $. Recall that this constraint is written in the form of a complexity bound on $ K(D) $ in \eqref{eq_upper_bound_on_K}. Using the model in \eqref{eq_q_delta_tradeoff}, this bound can be converted to a power penalty constraint using the the model proposed in \eqref{eq_q_delta_tradeoff}. Thus, in the light of Lemma \ref{lemma_max_inf_rate} and \ref{lemma_max_inf_rate_w_rate_gap}, the optimization problem now reduces to
\begin{subequations}
	\label{eq_mul_opt_prob_3}
	\begin{align}
	\underset{k, s}{\text{minimize}} & ~~ [\Delta r, ~ \Delta \rho]
	\\
	\text{s.t.}  & ~~ r_m \leq r_s - \Delta r \leq \min\big\{M(n,\rho_s + \Delta \rho, \varepsilon_m), r_s\big\} , \label{eq_mul_opt_problem_3_const_1}
	\\
	& ~~ 0 \leq \Delta \rho \leq \rho_m - \rho_s , \label{eq_mul_opt_problem_3_const_2}
	\\
	& ~~ 0 \leq s \leq s_m , \text{ and } \eqref{eq_mul_opt_problem_1_const_5}, \label{eq_mul_opt_problem_3_const_3}
	\end{align}
\end{subequations}
where the minimum in \eqref{eq_mul_opt_problem_3_const_1} is due to the fact that, by definition, $ \Delta r \geq 0 $. Notice that constraints on $ k $ and $ s $ can be omitted for further analysis, since their effect have already been represented in \eqref{eq_mul_opt_problem_3_const_1}. Thus, we have
\begin{subequations}
	\label{eq_mul_opt_prob_4}
	\begin{align}
	{\text{minimize}} & ~~ [\Delta r, ~ \Delta \rho]
	\\
	\text{s.t.}  & ~~  \eqref{eq_mul_opt_problem_3_const_1} \text{ and }  \eqref{eq_mul_opt_problem_3_const_2}.
	\end{align}
\end{subequations}
From \eqref{eq_mul_opt_prob_3}, the attainable objective set follows
\begin{equation}\label{key}
S = \Big\{ \{\Delta r, \Delta \rho\} \big| ~ \big[r_s -  M(n,\rho_s + \Delta \rho, \varepsilon_m)\big]^+  \leq \Delta r \leq r_s - r_m, 
\text{ for all } 0 \leq \Delta \rho \leq \rho_m - \rho_s \Big\} .
\end{equation}
An illustration on the attainable objective set $ S $ is depicted in Fig. \ref{fig_set_explanation}, where the reference transmission rate is $ r_s = 0.5  $ and constraints on rate, power, and latency are $ r_m = 0.25 $ and $\rho_m = 8 \, $dB, $ L_m = 1\,$ms, respectively. It can be seen that the attainable objective set $ S $ relies in between the constraints defined in \eqref{eq_mul_opt_problem_3_const_1}, \eqref{eq_mul_opt_problem_1_const_3}, and \eqref{eq_mul_opt_problem_1_const_4}. 

\begin{figure}[t]
	\centering
	\whencolumns{
		\includegraphics[width=.6\linewidth]{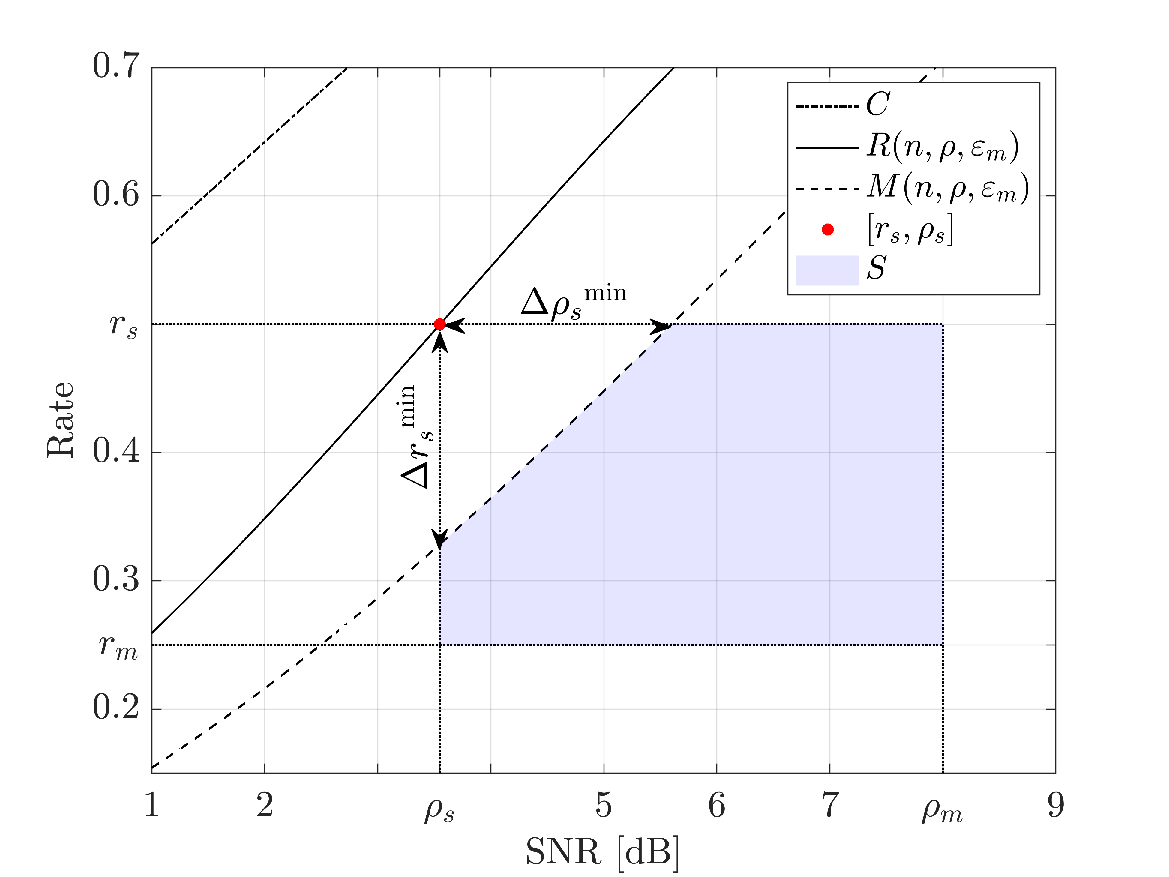}
	}{
		\includegraphics[width=1\linewidth]{figures/set_explanation.eps}
	}
	\caption{A numerical realization that shows the attainable objective set of the MOOP, denoted as $ S $, where the reference transmission rate is $ r_s = 0.5 $ and $ r_m = 0.25 $, $ \rho_m = 8 \,$dB. The ultimate transmission point $ [r_s, \rho_s] $ is shown with the star. As a comparison, the capacity and the maximum achievable rate, defined in \eqref{eq_normal_approximation}, are also depicted with the maximum achievable rate subject to latency, reliability, decoding complexity constraints, where $ n=128 $, $ \varepsilon_m = 10^{-5} $, $ L_m = 1\,$ms,  $ T_s = 1\,\mu$s, and $ T_b=1\,$ns.}
	\label{fig_set_explanation}
\end{figure}

\begin{remark}
	For a given $ \Delta\rho $, that is $ 0 \leq \Delta \rho \leq \rho_m - \rho_s  $, if $ M(n,\rho_s + \Delta \rho, \varepsilon_m) < r_m $, no feasible pair can be found.
\end{remark} 

\begin{lemma} \label{lemma_optimum_lies_on_M}
	The set of optimum solutions of \eqref{eq_mul_opt_prob} always leads to transmission pairs that lie on $ M(n,\rho,\varepsilon_m) $.
\end{lemma}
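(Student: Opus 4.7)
The plan is to argue by contradiction: if a Pareto-optimal pair $(\Delta r^*, \Delta \rho^*) \in S$ does not lie on the curve $r = M(n, \rho, \varepsilon_m)$ (equivalently, $r_s - \Delta r = M(n, \rho_s + \Delta\rho, \varepsilon_m)$), then the rate-gap constraint \eqref{eq_mul_opt_problem_3_const_1} is non-binding at this pair, and I would exploit that slack to construct a feasible pair that Pareto-dominates $(\Delta r^*, \Delta \rho^*)$. Rewriting \eqref{eq_mul_opt_problem_3_const_1} as $\Delta r \geq [r_s - M(n, \rho_s + \Delta \rho, \varepsilon_m)]^+$, the hypothesis ``off the curve $M$'' is precisely the strict inequality
\[
\Delta r^* \;>\; \bigl[\, r_s - M(n, \rho_s + \Delta \rho^*, \varepsilon_m)\,\bigr]^+ .
\]

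Next, I would distinguish two cases by whether the $[\,\cdot\,]^+$ is active. In the unsaturated regime $M(n, \rho_s + \Delta\rho^*, \varepsilon_m) < r_s$, the lower bound on $\Delta r$ is strictly positive, and any $\delta \in (0,\, \Delta r^* - r_s + M(n, \rho_s + \Delta\rho^*, \varepsilon_m))$ yields the perturbed pair $(\Delta r^* - \delta,\, \Delta \rho^*)$, which remains feasible, strictly improves $\Delta r$, and leaves $\Delta \rho$ unchanged, contradicting Pareto-optimality. In the saturated regime $M(n, \rho_s + \Delta\rho^*, \varepsilon_m) \geq r_s$ the lower bound collapses to $0$; if $\Delta r^* > 0$ the same rate-reduction argument still works, so the only remaining sub-case is $\Delta r^* = 0$ with $M(n, \rho_s + \Delta\rho^*, \varepsilon_m) > r_s$ strictly (equality would place the pair on the curve). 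Here the slack has to be absorbed along $\Delta \rho$: by continuity of $M(n,\cdot,\varepsilon_m)$ in its SNR argument, which follows from Lemma~\ref{lemma_max_inf_rate} together with the continuity of $R$, $C$, and $V$ in \eqref{eq_normal_approximation}--\eqref{eq_channel_capacity} and the closed form of $\Delta \rho^{\min}$ in \eqref{eq_min_req_power_penalty}, there exists $\delta' > 0$ with $M(n, \rho_s + \Delta \rho^* - \delta', \varepsilon_m) \geq r_s$; then $(0,\, \Delta \rho^* - \delta')$ is feasible and strictly improves $\Delta \rho$, again contradicting Pareto-optimality.

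The main subtlety I expect is exactly this saturated sub-case. In the unsaturated regime the binding inequality $\Delta r \geq r_s - M$ makes the contradiction one-line, but when the $[\,\cdot\,]^+$ plateau kicks in one has to rule out Pareto-optimal pairs sitting on the horizontal segment $\{(0, \Delta\rho) : M(n, \rho_s + \Delta\rho, \varepsilon_m) > r_s\}$; the continuity argument above does precisely that, pinning the Pareto-optimal power penalty in the saturation regime to the left endpoint of $\{\Delta \rho \geq 0 : M(n, \rho_s + \Delta \rho, \varepsilon_m) \geq r_s\}$, where $M$ equals $r_s$. Combining both cases shows that every Pareto-optimal pair satisfies $r_s - \Delta r = M(n, \rho_s + \Delta\rho, \varepsilon_m)$, proving the lemma.
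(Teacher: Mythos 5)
Your proof is correct, but it argues the complementary direction to the one the paper emphasizes, and in doing so it is actually more complete. The paper's proof is a two-sentence feasibility argument: any pair with $r_s-\Delta r > M(n,\rho_s+\Delta\rho,\varepsilon_m)$ violates the constraints, hence the optimum lies on $M$ --- leaving implicit the substantive half of the claim, namely that a pair strictly \emph{inside} the feasible region (with slack in \eqref{eq_mul_opt_problem_3_const_1}) cannot be Pareto-optimal. Your perturbation argument supplies exactly that half: in the unsaturated regime you shrink $\Delta r$ toward the binding value of the constraint, and in the saturated regime you shrink $\Delta\rho$ using continuity of $M$. The case split on the $[\,\cdot\,]^+$ plateau is the right subtlety to isolate, since the horizontal segment $\{(0,\Delta\rho): M(n,\rho_s+\Delta\rho,\varepsilon_m)>r_s\}$ is the only place where a dominated point could masquerade as optimal. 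One small point you leave implicit: your continuity step needs $\Delta\rho^*>0$ so that there is room to decrease it; this is guaranteed because $M(n,\rho_s,\varepsilon_m)\leq R(n,\rho_s,\varepsilon_m)=r_s$ by the definition of $\rho_s$ in \eqref{eq_reference_power} together with Lemma~\ref{lemma_max_inf_rate}, so the strict inequality $M>r_s$ cannot occur at $\Delta\rho=0$. With that observation added, your argument is airtight; the paper's version buys brevity and geometric intuition (``you cannot cross the achievable boundary''), while yours buys a rigorous treatment of why optimal points cannot sit strictly below it.
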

\begin{proof}
	 Recall that $ M(n,\rho,\varepsilon_m) $ represents the maximal achievable limit under latency, realibility, and decoding complexity constraints. Any transmission rate-power pair that violates $ M(n,\rho,\varepsilon_m) $, also violates the constraints in \eqref{eq_mul_opt_prob}. Thus, no better solution can be achieved above $ M(n,\rho,\varepsilon_m) $ and therefore the optimum selections of $ \Delta r $ and $ \Delta \rho $ yield the set of transmission pairs that lies on it.
\end{proof}

Lemma \ref{lemma_optimum_lies_on_M} shows that \eqref{eq_mul_opt_prob_3} is equivalent to the following
\begin{subequations}
	\label{eq_opt_problem_3}
	\begin{align}
	\text{minimize} & ~~ [\Delta r, ~ \Delta \rho]
	\\
	\text{s.t.}  
	& ~~ r_s - \Delta r = M(n,\rho_s + \Delta \rho, \varepsilon_m), \label{eq_mul_opt_problem_4_const_1}
	\\
	& ~~ \min\left\{\rho_s-\rho_m , \, \Delta {\rho_s}^{\min}\right\}  \geq \Delta \rho \geq 0, \label{eq_mul_opt_problem_4_const_2}
	\\
	& ~~  \min\left\{r_s-r_m , \, \Delta {r_s}^{\min}\right\} \geq \Delta r \geq 0 , \label{eq_mul_opt_problem_4_const_3}
	\end{align}
\end{subequations} 
where $ \Delta {\rho_s}^{\min} $ and $ \Delta {r_s}^{\min} $ represent the minimum amount of power penalty and rate gap that is required to meet the constraints at rate $ r_s $ and SNR $ \rho_s $, respectively. The minimum selections in \eqref{eq_mul_opt_problem_4_const_2} and \eqref{eq_mul_opt_problem_4_const_3} is due to the maximum limits of $ \Delta r $ and $ \Delta \rho $.

As an example, let the optimum $ \Delta r $ is selected to be $ 0 $, which leads to the rate $ r = r_s $. Assuming that $ \rho_s-\rho_m \geq \Delta {\rho_s}^{\min} $ , setting $ \Delta r = 0 $ only allows a horizontal movement in Fig. \ref{fig_set_explanation} and therefore a certain amount of power penalty needs to be added in order to meet the constraints. The optimum rate-power pair would be
\begin{equation}\label{eq_opt_trans_pair_alpha_1}
[ r_s, ~ \rho_s + \Delta {\rho_s}^{\min} ], 
\end{equation} 
since $ r_s = M(n,\rho_s + \Delta {\rho_s}^{\min},\varepsilon_m) $. 
Similar analysis can be applied if the optimum $ \Delta \rho $ is $ 0 $. Assuming that $ r_s-r_m \geq \Delta {r_s}^{\min} $, only a vertical movement is permitted. Since $ r_s - \Delta {r_s}^{\min} = M(n,\rho_s,\varepsilon_m) $, the optimum rate-power pair is 
\begin{equation}\label{eq_opt_trans_pair_alpha_0}
[ r_s - \Delta {r_s}^{\min}, ~ \rho_s ]. 
\end{equation} 

\begin{definition}[Pareto boundary \cite{cao_pareto_boundary}]
	The set of the optimal objectives of a MOOP is called the Pareto boundary. Any objective pair the belongs to the Pareto boundary cannot be dismissed since none of the objectives can be improved without degrading the others.
\end{definition}

\begin{corollary}
	$ M(n,\rho,\varepsilon_m) $ for $ \rho_s \leq \rho \leq \min\left\{\rho_m , \,\rho_s + \Delta {\rho_s}^{\min}\right\} $ represents the Pareto boundary.
\end{corollary}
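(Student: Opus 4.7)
The plan is to prove the two set inclusions underlying the claim: every Pareto-optimal pair lies on $M(n,\rho,\varepsilon_m)$, and conversely every point on this curve within the stated $\rho$-range is Pareto-optimal. The first direction is essentially a restatement of Lemma \ref{lemma_optimum_lies_on_M} together with the feasibility constraints: any pair $\{\Delta r,\Delta\rho\}$ with $r_s-\Delta r$ strictly below $M(n,\rho_s+\Delta\rho,\varepsilon_m)$ is in the interior of the attainable objective set $S$, so at least one of $\Delta r$ or $\Delta\rho$ can be decreased infinitesimally while remaining feasible. Such a pair cannot be Pareto-optimal, so any Pareto-optimal pair must satisfy the equality $r_s-\Delta r = M(n,\rho_s+\Delta\rho,\varepsilon_m)$, i.e., lie on the curve.

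For the reverse inclusion, the crux is the strict monotonicity of $M(n,\rho,\varepsilon_m)$ as a function of $\rho$. I would justify this by combining \eqref{eq_normal_approximation}, in which $R(n,\rho,\varepsilon)$ inherits strict monotonicity from $C(\rho)$ on the BI-AWGN channel, with the representation $M(n,\rho,\varepsilon) = R(n,\rho-\Delta\rho^{\min},\varepsilon)$ from Lemma \ref{lemma_max_inf_rate}, so that $M$ is just a rightward shift of $R$ and hence also strictly increasing in $\rho$. With monotonicity in hand, fix a candidate point $\{\Delta r^*,\Delta\rho^*\}$ on the curve. Suppose some feasible pair $\{\Delta r,\Delta\rho\}$ dominated it, meaning both components are no larger and at least one is strictly smaller. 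Then $r_s-\Delta r \geq r_s-\Delta r^* = M(n,\rho_s+\Delta\rho^*,\varepsilon_m) \geq M(n,\rho_s+\Delta\rho,\varepsilon_m)$, with a strict inequality somewhere in the chain due to monotonicity, contradicting the feasibility constraint \eqref{eq_mul_opt_problem_3_const_1}. The range of $\rho$ in the statement is exactly the translation of the numerical bounds \eqref{eq_mul_opt_problem_4_const_2}-\eqref{eq_mul_opt_problem_4_const_3} to the curve, ensuring the corresponding pair respects both the rate floor $r_m$ and the short-term power budget $\rho_m$, along with the trivial bounds $\Delta r,\Delta\rho \geq 0$.

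The main obstacle I anticipate is the formal verification of strict monotonicity of $M$, since $\Delta\rho^{\min}$ in \eqref{eq_min_req_power_penalty} depends on $\kappa$, which itself depends on the coderate through $k$. A careful argument must check that increasing $\rho$ along the curve yields a net increase in the achievable rate, taking into account the self-consistency of the rate $r_s-\Delta r = M(n,\rho,\varepsilon_m)$. Once this monotonicity is established, the rest of the proof is a routine application of the definition of Pareto optimality, since along a strictly monotone curve the domination conditions collapse to the two coordinate directions handled above.
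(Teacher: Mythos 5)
Your proposal is correct, and it is actually more complete than what the paper offers: the paper states this corollary with no proof at all, treating it as an immediate consequence of Lemma \ref{lemma_optimum_lies_on_M} (optimal solutions lie on $M(n,\rho,\varepsilon_m)$) together with the box constraints \eqref{eq_mul_opt_problem_4_const_2}--\eqref{eq_mul_opt_problem_4_const_3}, which only supply the forward inclusion and the admissible range of $\rho$. Your first direction therefore coincides with the paper's implicit argument. What you add is the reverse inclusion --- that every point on the curve within the stated range is itself Pareto-optimal, i.e.\ no point on the curve dominates another --- via strict monotonicity of $M$ in $\rho$; the paper never verifies this, even though it is exactly what distinguishes ``the optimum lies on the curve'' from ``the curve is the Pareto boundary.'' The monotonicity obstacle you flag is genuine but resolvable with material already in the paper: Remark \ref{remark_power_gap} asserts that $\Delta\rho^{\min}$ increases with $r$ for fixed $n$, so the map $r \mapsto R^{-1}(n,r,\varepsilon_m) + \Delta\rho^{\min}(r)$ is strictly increasing, which makes the shifted curve $M$ a well-defined, strictly increasing function of $\rho$ and closes your argument. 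One cosmetic note: the range as printed in the corollary, $\rho \leq \min\{\rho_s - \rho_m, \Delta\rho_s^{\min}\}$, is evidently a typo for $\rho \leq \rho_s + \min\{\rho_m - \rho_s, \Delta\rho_s^{\min}\}$ (compare \eqref{eq_mul_opt_problem_4_const_2}); your reading of it as the translation of the numerical bounds onto the curve is the intended one.
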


\noindent Hence, any objective pair that belongs to the attainable objective region, $ S $, but not to the Pareto boundary is suboptimal, since there exist other operating points that are better or at least as good for every objective. Numerical realizations of Pareto boundaries that are achieved with different processor speeds, i.e. different $ T_b $ values, are depicted in Fig. \ref{fig_threeD_pareto_boundary}. Notice that the ultimate point $ [0,0] $ can be achieved when  $ T_b \leq 10^{-15}\,$s. However, as $ T_b $ increases, i.e. the processor speed decreases, the boundary moves away from the ultimate pair and a wider Pareto boundary is experienced, which leads to a broader set of optimal set of objectives. 

\begin{figure}[t]
	\centering
	\whencolumns{
		\includegraphics[width=.6\linewidth]{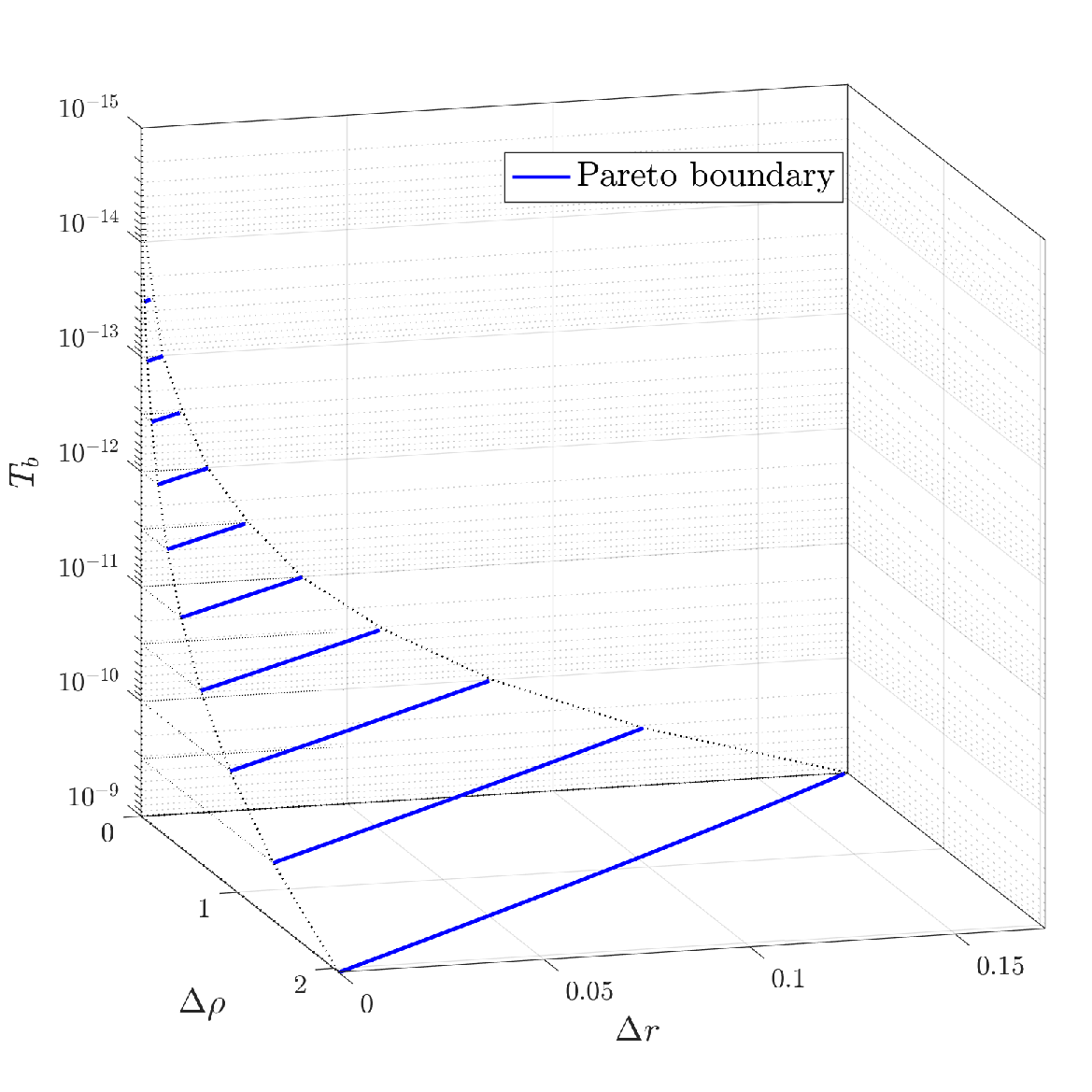}
	}{
		\includegraphics[width=1\linewidth]{figures/threeD_pareto_boundary.eps}
	}
	\caption{Numerical realizations of the Pareto boundaries with $ r_s = 0.5 $ for various $ T_b $, where $ n=128 $, $ \varepsilon_m = 10^{-5} $,  $ L_m = 1\,$ms, and $ T_s = 1\,\mu$s.}
	\label{fig_threeD_pareto_boundary}
\end{figure}

\section{Objective Scalarization}

Solution to the MOOP leads to a set of Pareto-optimal transmission pairs. When selecting a single optimum rather than a set of optimal points is desired, scalarization techniques are applied. The purpose of the scalarization is to aggregate the objectives into a single objective function and reduce the problem to a constrained single-objective optimization problem \cite{bjornson_multiobjective}. Several scalarization techniques have been presented in the literature, such as penalty-based intersection method, normal boundary intersection method, and weighted-$ l_\theta $ norm scalarization method \cite{taha_methods, chang_multiobjective, emmerich_a_tutorial}.  In this paper, the focus will be on weighted-$ l_\theta $ norm scalarization technique due to its simple structure and broad applicability. Mathematically, this technique can be applied to \eqref{eq_mul_opt_problem_1_obj} as \cite{taha_methods}
\begin{equation}\label{eq_general_obj_func}
\text{minimize} ~~ \left( A\alpha (\Delta r)^\theta + B(1-\alpha) (\log \Delta \rho)^\theta \right)^{\frac{1}{\theta}} ,
\end{equation} 
for $ \theta \geq 1 $ and $ 0 \leq \alpha \leq 1 $, where $ \alpha $ and $ (1-\alpha) $ represent the linear positive weight of the individual objectives and indicate the priority, $ \theta $ is the value of the norm, and finally $ A $ and $ B $ are some constant weights. For the simplicity of the analysis, we set $ A = 1 $ and $ B = 1 $. Observe that $ \Delta r $ and $ \Delta\rho $ do not share the same units. In the celebrated formula of Shannon for the real AWGN channel the rate is related to the signal power via the logarithmic function. For this reason, we also consider the logarithm of $ \Delta\rho $ in the objective function instead of $ \Delta\rho $ itself. 

Setting $ \alpha = 1 $, the objective function shifts to the minimization of $ \Delta r $. In this case $ \Delta r = 0 $ and the optimum transmission pair yields \eqref{eq_opt_trans_pair_alpha_1}. On the other hand, if $ \alpha = 0 $, the objective function considers only $ \Delta \rho $. Therefore, the optimum pair is the one shown in \eqref{eq_opt_trans_pair_alpha_0}. Pareto-optimal transmission pairs that lie in between \eqref{eq_opt_trans_pair_alpha_1} and \eqref{eq_opt_trans_pair_alpha_0} can be accessed by selecting different values of $ \alpha $, depending on the convexity-concavity of the Pareto boundary \cite{emmerich_a_tutorial}. 

The selection of the norm, $ \theta  $, is a significant determinant for accessibility of the Pareto-optimal pairs, due to the shape of the attainable objective set $ S $, which depends on the objective functions and constraints, some optimal pairs cannot be accessed with the selected scalarization function. This will be discussed further in the following Section. We continue our analysis by setting $ \theta = 1 $ and $ \theta = \infty $, which are the two most frequently used weighted-$ l_\theta $ norm scalarization techniques, namely linear weighted-sum and weighted Chebyshev, respectively.

\begin{figure*}[t]
	\begin{subfigure}{0.32\textwidth}
		\begin{subfigure}{1\textwidth}
			\centering
			\includegraphics[width=1\linewidth]{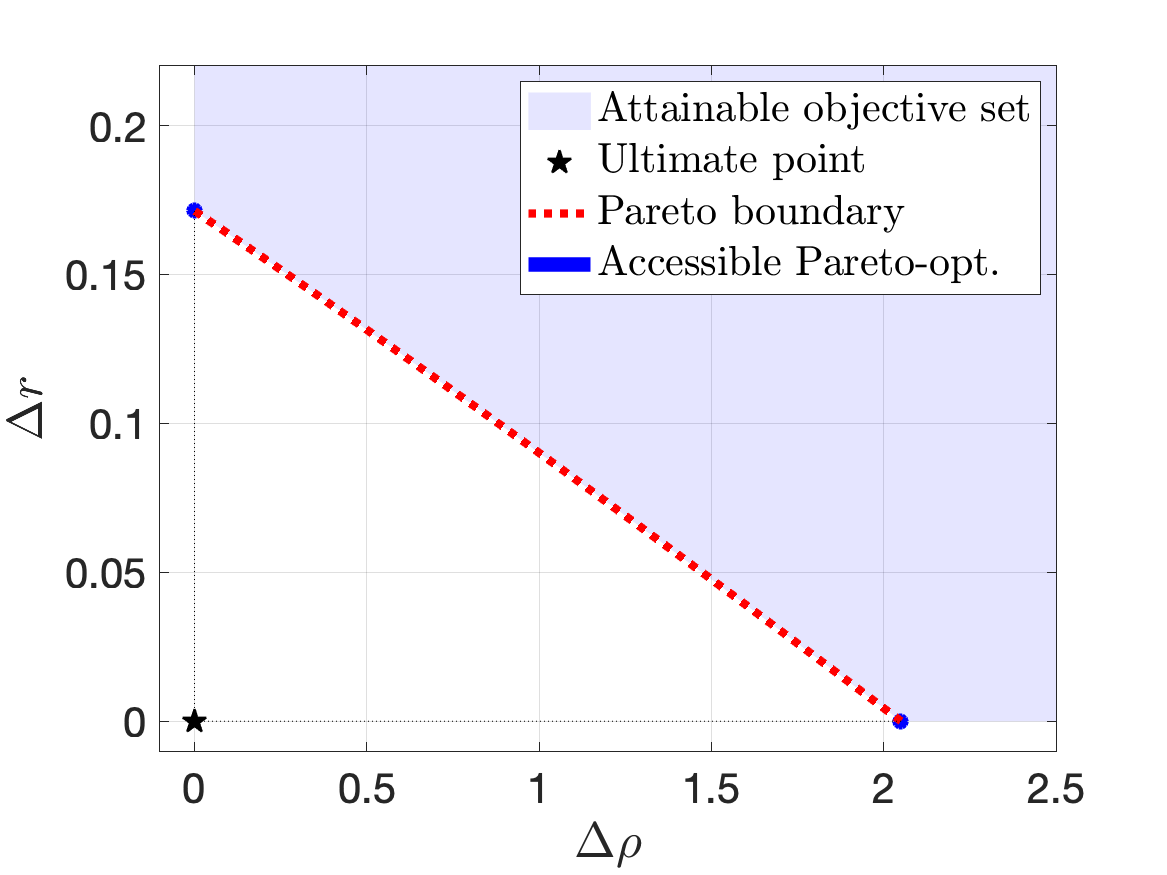}
		\end{subfigure}%
		\caption{$ r_s = 0.5$}
	\end{subfigure}
	\begin{subfigure}{0.32\textwidth}
		\begin{subfigure}{1\textwidth}
			\centering
			\includegraphics[width=1\linewidth]{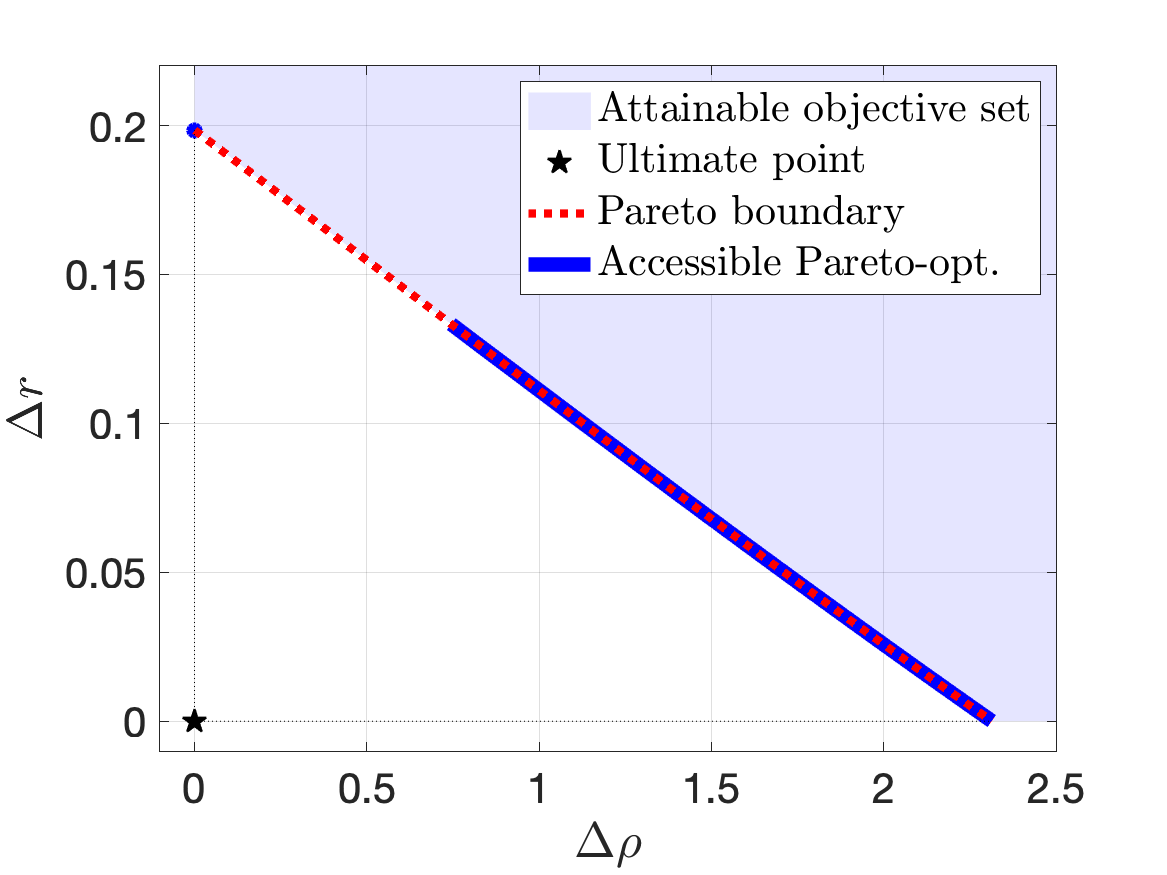}
		\end{subfigure}%
		\caption{$ r_s = 0.7$}
	\end{subfigure}
	\begin{subfigure}{0.32\textwidth}
		\begin{subfigure}{1\textwidth}
			\centering
			\includegraphics[width=1\linewidth]{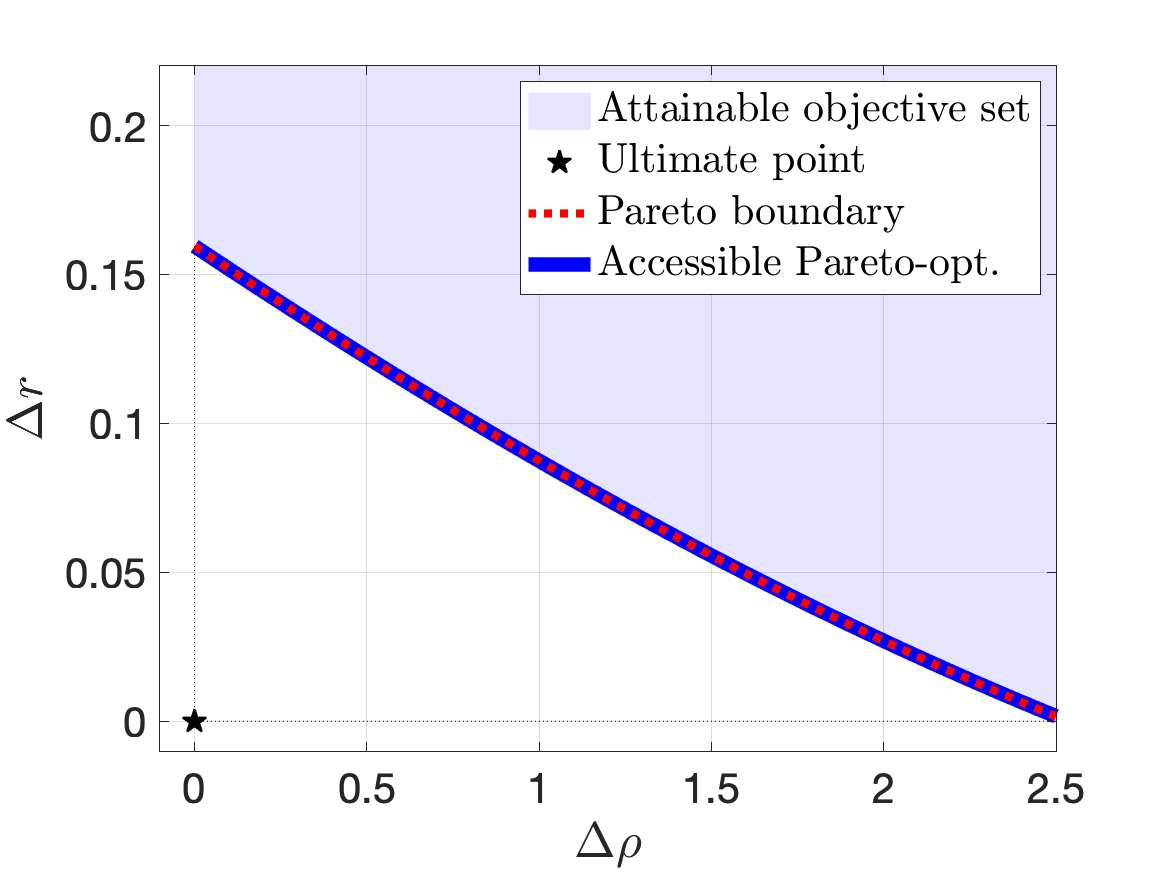}
		\end{subfigure}%
		\caption{$ r_s = 0.9$}
	\end{subfigure}
	\medskip
	
	\caption{Comparisons between the Pareto boundaries and the accessible Pareto-optimal points with the linear weighted-sum objective function  for $ r_s = \{0.5, 0.7, 0.9\} $ when $ n=128 $, $ \varepsilon_m = 10^{-5} $, $ T_s = 1 \,\mu$s, $ T_b = 1 \,$ns, and $ L_m = 1 \,$ms. Here, we assume that $ r_m $ and $ \rho_m $ are sufficiently low and high, respectively, such that they have no observable effect on the Pareto boundary.}
	\label{fig_pareto_all}
\end{figure*}

\subsection{Linear Weighted-Sum Objective Function: $ \theta = 1 $} \label{sec_linear_weighted_obj_func}

By setting $ \theta = 1 $, the objective function reduces to the weighted sum of the objectives. The objective function can be written as
\begin{equation}\label{eq_linear_weighted_sum_obj_func}
{\text{minimize}} ~~ \alpha \Delta r + (1-\alpha)\log\Delta \rho .
\end{equation}

The channel capacity $ C $ for BI-AWGN channel is bounded between $ [0,1] $. Due to its monotonicity in $ \rho $, it exhibits a sigmoidal shape. Similar to $ C $, \eqref{eq_normal_approximation} also follows a sigmoidal shape, which leads to having convex and concave portions that are separated with an inflection point. The inflection point represents the point where the shape of the maximal rate curve changes from convex to concave. It is denoted as $ \{r_i, \rho_i\} $ and is the solution to the equation 
\begin{equation}\label{key}
\frac{\partial^2 R(n,\rho,\varepsilon)}{\partial \rho^2} \Bigg|_{\rho = \rho_i} = 0.
\end{equation}
Since $  M(n,\rho,\varepsilon) $ is the $ \Delta {\rho}^{\min} $ amount shifted version of $  R(n,\rho,\varepsilon) $ and $ \Delta {\rho}^{\min} $ is monotonically increasing in $ \rho $, $  M(n,\rho,\varepsilon) $ also has a sigmoidal structure, where the inflection occurs at $ \rho_i + \Delta {\rho_i}^{\min} $.  

\begin{remark}\label{rem_regions}
	Due to the geometry of $  M(n,\rho,\varepsilon) $ in $ \rho $, the Pareto boundary is concave when $ \rho_s \leq \rho_i $ and convex when $ \rho_s > \rho_i + \Delta {\rho_i}^{\min} $. However, when $ \rho_i + \Delta {\rho_i}^{\min} \geq \rho_s > \rho_i $, the Pareto boundary consists of both concave and convex regions.
\end{remark}
Thus, depending on $ \rho_i $ and $ \rho_s $, the accessibility of the Pareto-optimal pairs with linear weighting objective function can be separated into three parts:

\subsubsection{Low-SNR regime ($ \rho_s \leq \rho_i $)} \label{sec_low_snr_region}

This is the regime where the Pareto boundary is concave. Only two Pareto-optimal points, which are the end points of the Pareto boundary, described in \eqref{eq_opt_trans_pair_alpha_1} and \eqref{eq_opt_trans_pair_alpha_0}, are accessible. 

\subsubsection{Medium-SNR regime ($ \rho_i + \Delta {\rho_i}^{\min} \geq \rho_s > \rho_i $)}

This is the regime where the Pareto boundary is convex over a certain region and concave over a certain other region, where these regions are specified in Remark \ref{rem_regions}. Within this regime, due to convexity, the optimal points located between $ \rho \in [\rho_i + \Delta {\rho_i}^{\min}, \rho_s + \Delta {\rho_s}^{\min}] $ are accessible. However, due to the concavity, optimal points between $ \rho \in (\rho_s, \rho_i + \Delta {\rho_i}^{\min}) $ are not accessible. Only the optimal point where $ \rho = \rho_s $ is accessible since it is the end point of the Pareto boundary. 

\subsubsection{High-SNR regime ($ \rho_s > \rho_i + \Delta {\rho_i}^{\min} $)} \label{sec_high_snr_region}

This is the regime where the Pareto boundary is convex and Pareto-optimal points in between \eqref{eq_opt_trans_pair_alpha_1} and  \eqref{eq_opt_trans_pair_alpha_0} are accessible.

This phenomenon can be seen in Fig. \ref{fig_pareto_all}.a, \ref{fig_pareto_all}.b, and  \ref{fig_pareto_all}.c, where Pareto boundaries and accessible Pareto-optimal $ \Delta r $ and $ \Delta \rho $ values are depicted for $ r_s = \{0.5, 0.7, 0.9\} $, respectively. When $ r_s = 0.5 $, the Pareto boundary is concave. Thus, only two Pareto-optimal points are accessible with the linear weighted-sum objective function. However, when $ r_s = 0.7 $, the Pareto boundary is convex over a part of the region and concave over the rest of the region and therefore the Pareto-optimal points on the convex part are accessible. When $ r_s = 0.9 $, the feasible region yields a convex Pareto boundary and enables the objective function to attain all Pareto-optimal points by selecting different values of $ \alpha $.

\subsection{Weighted Chebyshev Objective Function: $ \theta = \infty $}

Another way of scalarization can be achieved by setting $ \theta = \infty $ where now the objective function transforms to a weighted min-max formulation. This scalarization technique is known as the weighted Chebyshev objective function, which can be formulated as the following
\begin{equation}\label{eq_chebyshev_obj_func}
{\text{minimize}} ~~ \max\{ \alpha \Delta r, ~ (1-\alpha)\log\Delta \rho \}.
\end{equation}
The weighted Chebyshev objective function guarantees accessing all Pareto optimal points regardless of having concave or convex structure. 

\begin{remark}
	The Pareto boundary and its shape are not related to which scalarization method is selected. Its shape is drawn according to the set $ S $ which depends on the objective functions and constraints. The scalarization method only determines the accessible set of the Pareto optimal points. At the end, depending on $ A $, $ B $, and $ \alpha $, the scalarization function will find a final optimal point. Therefore, the final selected Pareto-optimal point will be different with different scalarization functions.
\end{remark}

\section{Case Study: Battery-Powered Transmission}

\begin{algorithm}[th]
	\caption{Multi-objective optimization}
	\begin{algorithmic}[1]
		\State Given $ n $ and $ \varepsilon_m $: \textbf{compute}: $ R(n,\rho,\varepsilon_m) $ using \eqref{eq_normal_approximation}
		\State Given $ 0 \leq r_s \leq 1 $: \textbf{compute}: $ \rho_s $ from \eqref{eq_reference_power}
		\State\textbf{compute}: $ \Delta {\rho}^{\min} $ using \eqref{eq_min_req_power_penalty}, $ \forall r \in (0,1] $
		\State\textbf{compute}: $ M(n,\rho,\varepsilon_m) $ using \eqref{eq_maximal_rate_with_latency}
		\State\textbf{compute}: $ \Delta r^{\min} $ using \eqref{eq_min_req_rate_gap}
		\If {$r_s - \Delta {r_s}^{\min} \geq r_m$} \State $ \Delta \rho^{\text{start}} = 0  $
		\Else \State {$ \Delta \rho^{\text{start}} = M^{-1}(n, r_m, \varepsilon_m)  $}
		\EndIf
		\If {$\rho_s + \Delta {\rho_s}^{\min} \leq \rho_m$} \State $ \Delta \rho^{\text{end}} = \Delta {\rho_s}^{\min}  $
		\Else \State {$ \Delta \rho^{\text{end}} = \rho_m - \rho_s  $}
		\EndIf
		\For {$\Delta \rho=\Delta \rho^{\text{start}}:\Delta \rho^{\text{end}}$}
		\State\textbf{compute}: $ \Delta r = r_s - M(n,\rho_s + \Delta \rho,\varepsilon_m) $
		\State Given $ t $: \textbf{compute}: $ \alpha $ from \eqref{eq_compute_alpha}
		\State Given $ A $, $ B $, and $ \theta $: \textbf{compute}: $ z(i) = \left( A\alpha (\Delta r)^\theta + B(1-\alpha) (\log \Delta \rho)^\theta \right)^{\frac{1}{\theta}} $
		\State $ i = i+1 $
		\EndFor
		\State Select $ [ r_s-\Delta r, ~ \rho_s+\Delta \rho ] $ minimizes $ z $ 
	\end{algorithmic} 
\end{algorithm}

In this section, we exemplify the importance of the MOOP by a case study. Suppose a battery-powered transmitter is communicating with a complexity constrained receiver, under latency and reliability constraints. The transmitter transmits codewords with $ n=128 $ blocklength under a latency constraint, defined in \eqref{eq_latency_constraint}. The objective is to maximize both the total number of information bits transmitted to the receiver until the battery dies and increase the energy efficiency at the same time. The transmitter and receiver set an ultimate transmission pair $ [r_s,\rho_s] $ and limits on rate and power, denoted as $ r_m $ and $ \rho_m $, respectively. The rate limit represents the minimum amount of information bits that must be sent with every codeword transmission and power limit is due to the power budget of the system.  Then, transceivers search for a transmission pair that can meet the constraints given the battery level. 

One can envision the communication environment as the following, when the battery is full, the transmitter can sacrifice power to maintain the constraints instead of introducing rate gap. Besides, as the remaining battery power level is decreasing, power is becoming more precious and the transmitter may need to introduce some amount of rate reduction instead of sacrificing more power. This can be achieved by defining the weight $ \alpha $ being related to the remaining battery power percentage, denoted as $ t $. Thus, a sigmoid relation between $ \alpha $ and $ t $, is defined as
\begin{equation}\label{eq_compute_alpha}
\alpha = 1-\left( 1 + \left( \frac{t}{1+t} \right)^2 \right)^{-1} .
\end{equation}
From \eqref{eq_compute_alpha}$, \alpha $ gets values close to $ 0 $ when the battery level is high and hence the goal of the optimization problem yields the minimization of $ \Delta r $. However, as the battery level decreases, $ \alpha $ increases and approaches $ 1 $. Therefore, the focus of the optimization problem shifts to the minimization of $ \Delta \rho $.\footnote{The relation between $ \alpha $ and $ t $ given in \eqref{eq_compute_alpha} can be accepted as a general example. Different relations according to the relevance of $ t $ in the case study can be proposed.} A significant observation here is that with the introduction of \eqref{eq_compute_alpha}, the transmission parameters are becoming adaptive with respect to $ t $ and constraints. 
An efficient algorithm to find the optimum transmission pair is shown in Algorithm 1.

\subsection{Numerical Results}

\begin{table}[t]
	\centering
	\caption{List of simulation parameters with their corresponding values.}
	\begin{tabular}{lll}
		\multicolumn{1}{c}{Parameter}         & \multicolumn{1}{c}{Value}        & \multicolumn{1}{c}{Comment}                        \\ \hline
		\multicolumn{1}{|l|}{$B$}             & \multicolumn{1}{l|}{$1$ Wh}      & \multicolumn{1}{l|}{Battery capacity}              \\ \hline
		\multicolumn{1}{|l|}{$r_s$}           & \multicolumn{1}{l|}{$0.5$}       & \multicolumn{1}{l|}{Target transmission rate}   \\ \hline
		\multicolumn{1}{|l|}{$r_m$}           & \multicolumn{1}{l|}{$0$}         & \multicolumn{1}{l|}{Minimum transmission rate}                  \\ \hline
		\multicolumn{1}{|l|}{$\rho_s$}        & \multicolumn{1}{l|}{$3.55$ dB}   & \multicolumn{1}{l|}{Target SNR}                 \\ \hline
		\multicolumn{1}{|l|}{$\rho_m$}        & \multicolumn{1}{l|}{$10$ dB}     & \multicolumn{1}{l|}{Maximum SNR}                   \\ \hline
		\multicolumn{1}{|l|}{$L_m$}           & \multicolumn{1}{l|}{$10^{-3}$ s}      & \multicolumn{1}{l|}{Maximum aggregate latency}     \\ \hline
		\multicolumn{1}{|l|}{$\varepsilon_m$} & \multicolumn{1}{l|}{$10^{-5}$}   & \multicolumn{1}{l|}{Maximum CEP}                   \\ \hline
		\multicolumn{1}{|l|}{$T_s$}           & \multicolumn{1}{l|}{$10^{-6}$ s} & \multicolumn{1}{l|}{Symbol duration}               \\ \hline
		\multicolumn{1}{|l|}{$T_b$}           & \multicolumn{1}{l|}{$10^{-9}$ s} & \multicolumn{1}{l|}{Processor speed}               \\ \hline
		\multicolumn{1}{|l|}{$p_a$}           & \multicolumn{1}{l|}{$30$ dB}     & \multicolumn{1}{l|}{Attenuation power per meter} \\ \hline
		\multicolumn{1}{|l|}{$d$}             & \multicolumn{1}{l|}{$100$ m}     & \multicolumn{1}{l|}{Distance between transceivers} \\ \hline
		\multicolumn{1}{|l|}{$\sigma^2$}      & \multicolumn{1}{l|}{-$110$ dBm}  & \multicolumn{1}{l|}{Noise power}                   \\ \hline
	\end{tabular}
	\label{tab_sim_param}
\end{table}

\begin{figure*}[t]
	\centering
	\begin{subfigure}{.49\textwidth}
		\centering
		\includegraphics[width=1\linewidth]{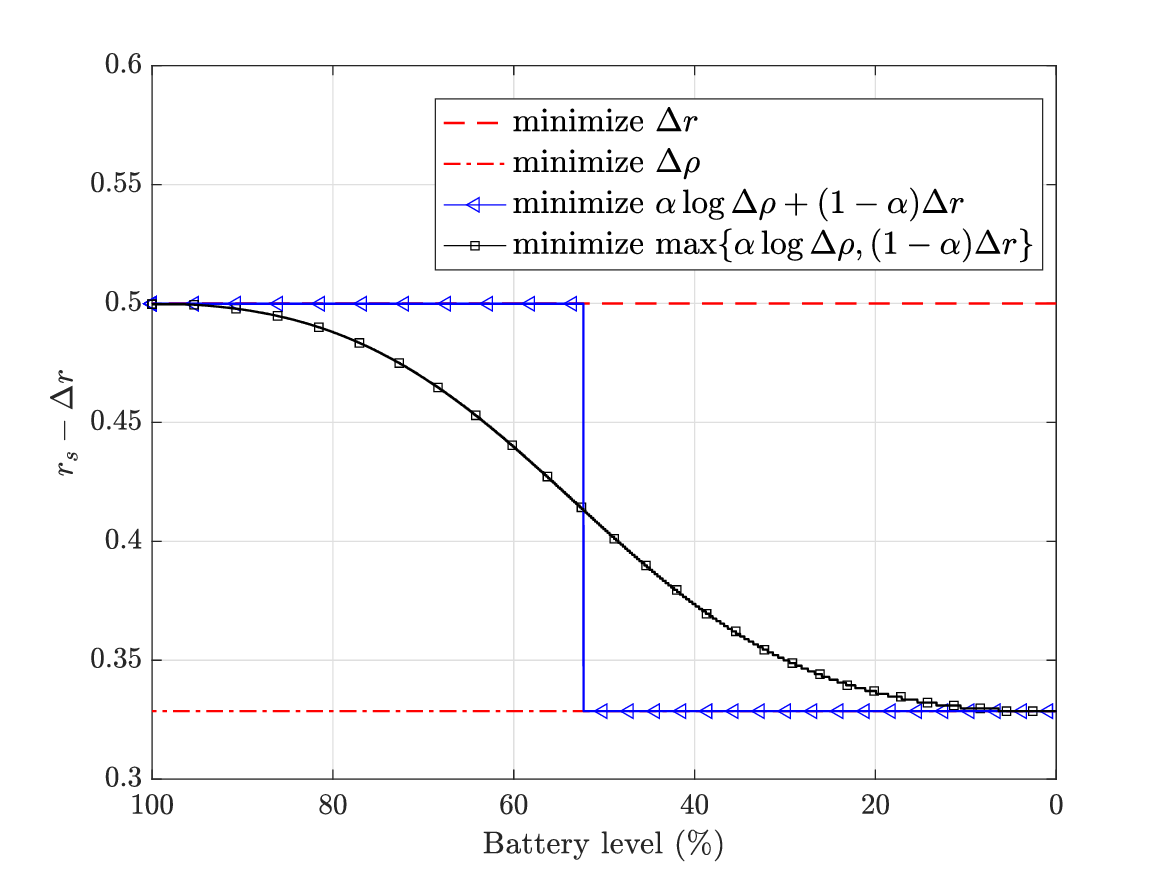}
		\caption{Comparison of the rate selections with respect to the remaining battery level.}
		\label{fig_batter_powered_single_scenario}
	\end{subfigure}~
	\begin{subfigure}{.49\textwidth}
		\centering
		\includegraphics[width=1\linewidth]{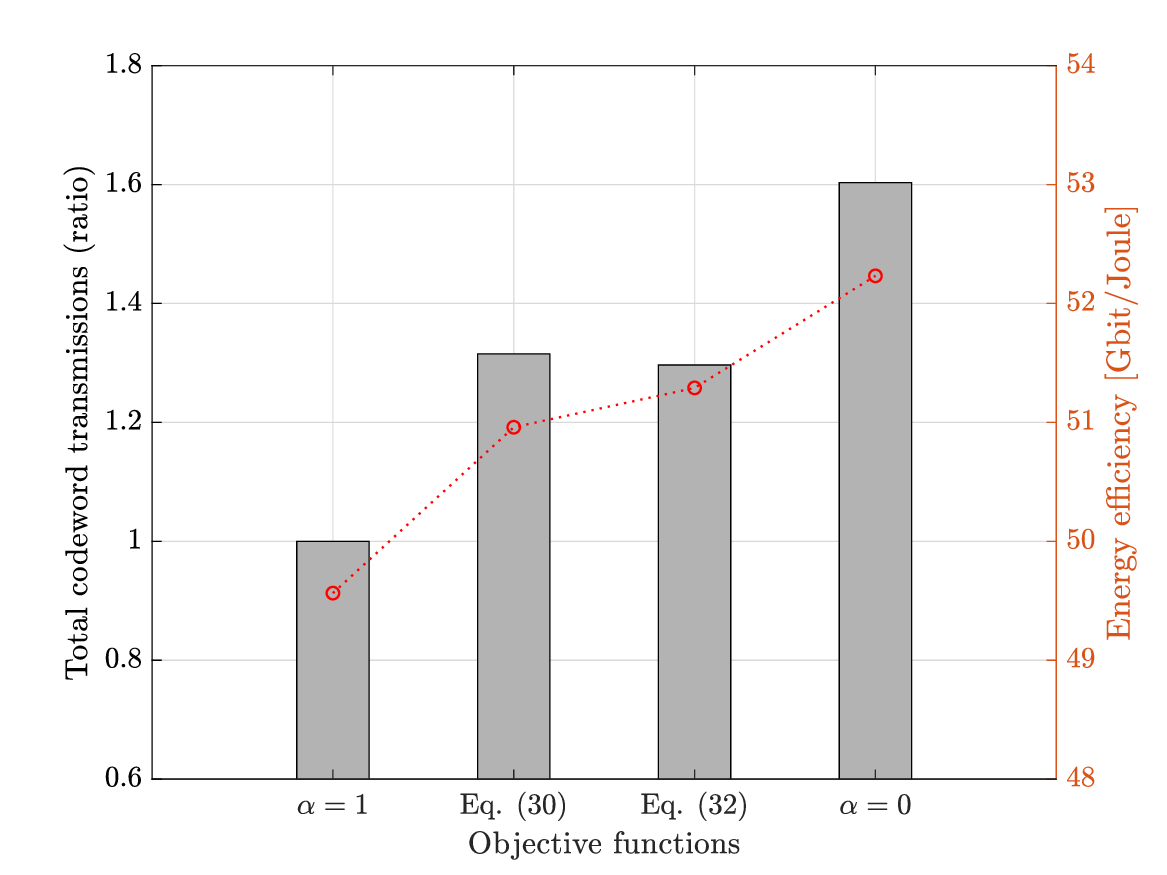}
		\caption{Ratio of the total number of codeword transmissions with different objective functions.}
		\label{fig_total_codeword_transmission}
	\end{subfigure}
	\caption{Comparison of the rate selections and total number of codeword transmissions for different objective functions, where $ r_s = 0.5 $: \textit{i}) minimization of $ \Delta r $: $ \alpha = 1 $, \textit{ii}) minimization of $ \Delta {\rho} $: $ \alpha=0 $, \textit{iii}) linear weighted-sum function, and \textit{iv}) weighted Chebyshev function.}
	\label{fig:test}
\end{figure*}

Suppose that a battery with $B = 1 \,$Wh capacity is used for data transmission at the transmitter. For each codeword transmission, the transmitter computes $ \alpha $ with respect to $ t $ and selects a transmission rate-power pair based on a target rate $ r_s $ and minimum rate $ r_m $ and the latency, reliability and decoding complexity constraints. Here, it is assumed that a sequence of data transmission is happening until the fully charged battery dies. Notice that as transmission continues, the transmitter calculates the remaining power from \cite{lauridsen_an}. As the battery power decreases, $ \alpha $ changes accordingly. Therefore, a new Pareto-optimal transmission pair is selected at each codeword transmission. This setup is an appropriate assumption for battery-powered URLLC IoT use cases where event-driven transmission is occurring \cite{feng_dynamic}. 

For the empirical analysis, we assume $30\,$dB average signal power attenuation at a reference distance of $1\,$m and $ -110 \,$dBm noise power at the receiver \cite{lopez_ultrareliable}. The distance between the transmitter and receiver is set to $ 100 \,$m. The list of all simulation parameters with their corresponding values is shown in Table \ref{tab_sim_param}.

We first set $ r_s = 0.5 $ and $ r_m = 0 $. Numerical results on the selected Pareto-optimal transmission rate at a codeword transmission with respect to $ t $ for linear weighted-sum and weighted Chebyshev functions are depicted in Fig. \ref{fig_batter_powered_single_scenario}. Additionally, for comparison purposes, we introduce two new objective functions: \textit{i}) minimize $ \Delta r $, i.e. $ \alpha=1 $, \textit{ii})  minimize $ \Delta \rho $, i.e. $ \alpha=0 $, which are shown with red dashed and dash-dotted lines, respectively. These objective functions do not depend on $ t $ and can be assumed as examples for fixed parameter transmission.

It can be seen that setting $ \alpha = 1 $ and $ \alpha = 0 $ yield two distinct choices, i.e. $ r_s $ and $ r_s - \Delta {r_s}^{\min} $ for all $ \alpha $. More interesting pictures arise, as we allow $ \alpha $ to vary with $ t $ and select other objectives. We first start with the linear weighted-sum objective function. As expected, due to the concave structure of the attainable objective set, only two Pareto-optimal points are accessible
\begin{equation}
	r = \begin{cases}
	r_s & \text{when $t$ is high}\\
	r_s - \Delta {r_s}^{\min} & \text{when $t$ is low} .
	\end{cases}
\end{equation}
Notice from  Fig. \ref{fig_batter_powered_single_scenario} that a transition from $ r_s $ to $ r_s - \Delta {r_s}^{\min} $ happens approximately when the remaining battery level falls below 50$ \% $. Besides, unlike the linear weighted-sum, optimum rate selection of the weighted Chebyshev function follows a smooth transition from $ r_s $ to $ r_s - \Delta {r_s}^{\min} $ and access all Pareto-optimal points in-between. 

Energy efficiency is crucial for IoT setups since very long battery life is required \cite{lauridsen_an}. Numerical results in Fig. \ref{fig_batter_powered_single_scenario} show that the selected Pareto-optimal transmission rate decreases as $ t $ reduces. Although this decreases the throughput of the communication system, it also allows to select lower transmit power, which, in the long run, allow more codeword transmissions and increase the energy efficiency in terms of total number of transmitted information bits per Joule. To see this effect, total number of codeword transmissions with a fully charged battery are depicted in Fig. \ref{fig_total_codeword_transmission}, where the left vertical axis represents the ratio of total number of codeword transmissions to the case where $ \alpha=0 $ and the right vertical axis shows the energy efficiency which is computed according to the following formula \cite{bjornson_how}
\begin{equation}\label{key}
\text{Energy efficiency} = \frac{n \sum (r_s - \Delta r)}{3600 B} ~~ [\text{bits}/\text{Joule}] ,
\end{equation}
where the sum is over all codeword transmissions until the battery dies. Results show that by implementing the MOOP, it is possible to transmit approximately 30\% more codewords and increase the power efficiency when linear weighted-sum or weighted Chebyshev objective functions are selected. 


\begin{figure}[t]
	\centering
	\includegraphics[width=1\linewidth]{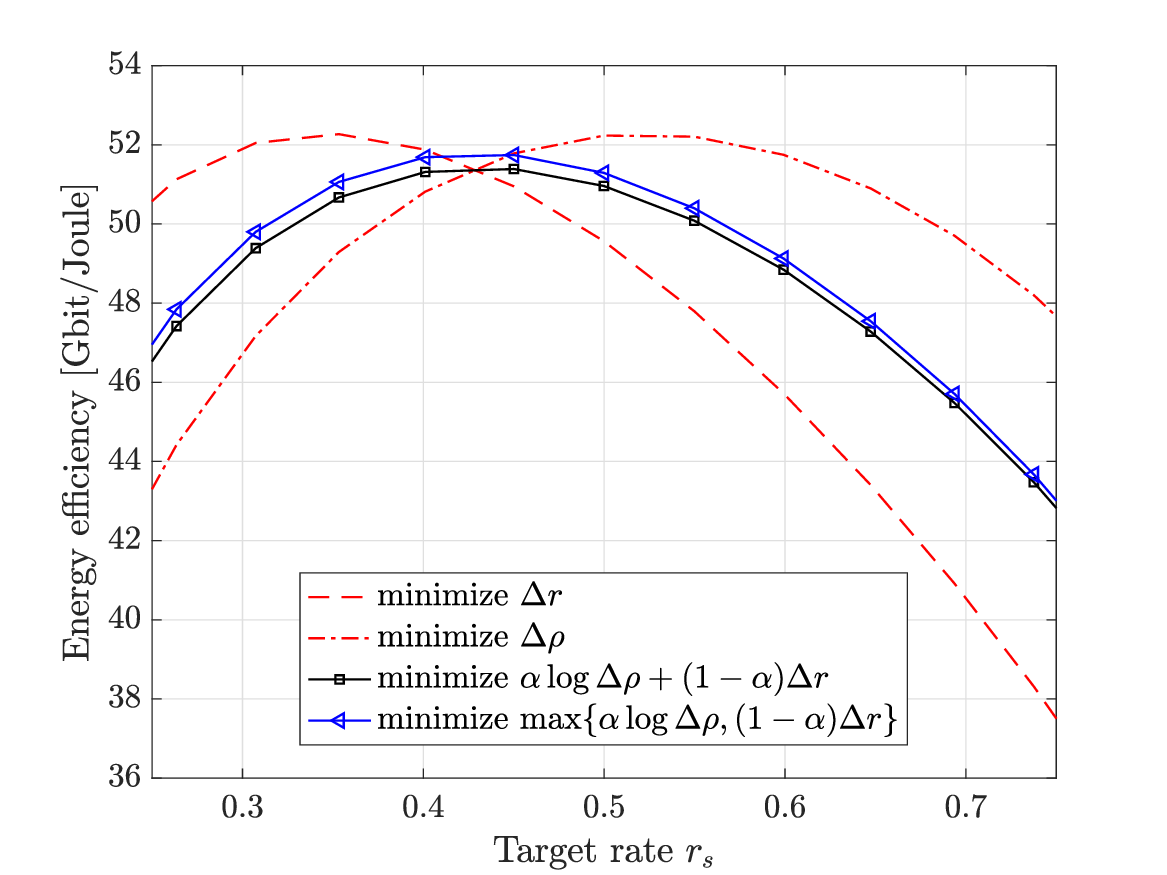}
	\caption{Comparison of the objective functions in terms of energy efficiency with respect to various target rates: \textit{i}) minimization of $ \Delta r $: $ \alpha = 1 $, \textit{ii}) minimization of $ \Delta {\rho} $: $ \alpha=0 $, \textit{iii}) linear weighted-sum function, and \textit{iv}) weighted Chebyshev function.}
	\label{fig_batter_powered_scenario}
\end{figure}

Next, we investigate the energy efficiency values of the MOOP framework for various target rates. Based on the results, which are depicted in Fig. \ref{fig_batter_powered_scenario}, one can divide the figure into two regions: low and high target rate regions. In low target rate region, 
the objective function with $ \alpha=0 $ achieves higher energy efficiency compared to the other objectives. On the contrary, in high rate region, 
the maximum energy efficiency values are achieved when $ \alpha=1 $. However, notice that for both regions the linear weighted-sum and weighted Chebyshev functions achieve a performance that is in between. 
Also notice that although the weighted Chebyshev function yields less number of codeword transmissions compared to linear weighted-sum function, as shown in Fig. \ref{fig_total_codeword_transmission}, it leads to higher energy efficiency for all $ r_s $. Finally, at the intersection of the low and high rate regions, the weighted Chebyshev function outperforms all other objective functions and achieves the highest energy efficient communication.  

\section{Conclusions and Future Directions}

We study the optimum transmission parameters for linear block codes with OS decoders in finite blocklength under latency, reliability, and decoding complexity constraints. It is shown that constraints introduce a back-off from the finite maximal achievable rate. Based on this analysis, a MOOP is formulated and solved with the help of an empirical model that can accurately track the trade-off between power gap versus per-information-bit computational complexity of the decoder. The attainable objective set and the Pareto boundary are characterized for two different multi-objective scalarization functions, namely the linear weighted-sum and weighted Chebyshev functions. Then, the accessability of the Pareto-optimal points are analyzed. It is also shown that depending on the convexity/concavity of the Pareto boundary, the Pareto-optimal points cannot be accessed by the linear weighted-sum objective function. Finally, benefits of the MOOP are shown with a case study on battery-powered transmission, where the weights for the scalarization of the MOOP is configured according to the remaining battery level and thus the optimum selection of the MOOP varies with the remaining battery level. This, therefore, introduces an adaptive transmission parameter selection. It is further shown that, MOOP framework increases both the throughput and energy efficiency of the system, compared to the classical fixed parameter transmission, while the constraints on latency, reliability, and decoding complexity are still met. Finally, application of the proposed MOOP framework on quasi-static fading channels is left as a potential future work.

\bibliographystyle{IEEEtran}
\bibliography{low_latency_multi_obj}

\balance

\end{document}